\author{Dongfang Zhao}
\affiliation{%
  \institution{University of Nevada}
  \city{Reno, NV 89557}
  \country{United States}
}
\email{dzhao@unr.edu}
\begin{document}
\title{Toward Equilibria and Solvability of Blockchain Pooling Strategies: 
A Topological Approach} 

\begin{abstract}
In 2015, Eyal proposed the first game-theoretical model for analyzing the equilibrium of blockchain pooling: when the blockchain pools are abstracted as a non-cooperative game, two pools can reach a Nash equilibrium with a closed-form formula;
Moreover, an arbitrary number of pools still exhibit an equilibrium as long as the pools have an equal number of miners.
Nevertheless, whether an equilibrium exists for three or more pools of distinct sizes remains an open problem.
To this end, this paper studies the equilibrium in a blockchain of arbitrary pools.
First, we show that the equilibrium among $q$ identical pools, coinciding the result demonstrated by Eyal through game theory,
can be constructed using a topological approach.
Second, if the pools are of different size, 
we show that (i) if the blockchain's pools exhibit two distinct sizes, an equilibrium can be reached, and
(ii) if the blockchain has at least three distinct pool sizes, there does not exist an equilibrium.
\end{abstract}



\keywords{Blockchains; Nash equilibrium; Point-set topology; Algebraic topology; Distributed computing; Solvability} 

\maketitle

\section{Introduction}

In 2015, Eyal~\cite{ieyal_sp15} proposed the first game-theoretical model for analyzing the equilibrium of blockchain pooling: when the blockchain pools are abstracted as a non-cooperative game, two pools can reach a Nash equilibrium with a closed-form formula.
Moreover, the pioneering work further proved that an arbitrary number of pools still exhibit an equilibrium as long as the pools have an equal number of miners.
Whether an equilibrium exists for three or more pools of distinct sizes, however,
remains an open problem.

This paper studies the equilibrium in a blockchain of \textit{arbitrary} pools:
the number of pools might be larger than three, and each of these pools might admit any number of miners:
the pools might be of different sizes, pair-wisely.
The motivation of this study is that
relaxing the constraints on the number of pools and their size-equality would better portrait the real-world scenarios:
evidently, all production blockchain systems comprise more than two pools whose sizes are barely equal.

We approach this problem through a methodology rather than game theory though;
instead, we go for a lower-level mathematical technique---topology.
This choice of methodology is purely technical:
we found that the framework and tools in topology are more flexible and arguably, more powerful.
Although there are multiple branches of topology, such as point-set topology and algebraic topology,
their essence is very similar:
it is all about the relationship among the elements of a set and the derived topological properties thereof.
In this sense, although in the remainder of this paper we intermingle the techniques from both point-set topology and algebraic topology,
the intrinsic rationale never diverges---again, it is all about the discussion on the relationship among, this time, the miners within and across blockchain pools.

From a technical perspective, this paper will first (\S\ref{sec:equal}) show that the equilibrium among $q$ identical pools, coinciding the result demonstrated by~\cite{ieyal_sp15} through game theory,
can be constructed using a topological approach.
Instead of using game theory as a black box in a non-cooperative game, 
we show that the equilibrium must exist even if we treat the blockchain pools in a coalitional game.
The key insight of our approach, and also the approaches of other results of this paper, lies at the topological invariant on the connectivity of topological objects.
This result demonstrates the power of topology:
although an equilibrium, known as a core, of a coalitional game, might not exist,
a topological-approach can speak of more---an equilibrium does exist for this problem.

The second part of this paper's technical contribution (\S\ref{sec:sov}) is to tackle a more challenging problem in blockchain pooling, where the pools are not equal in size.
Through various point-set-topological and algebraic-topological tools,
along with a reduction to a well-known $k$-set-agreement problem in solvability,
we show that (i) if the blockchain's pools exhibit two distinct sizes, an equilibrium can be reached, and
(ii) if the blockchain has at least three distinct pool sizes,
there does not exist an equilibrium.
Once again, the results demonstrate the power of topological methods for this problem:
we can now decide whether an equilibrium exists for an arbitrary set of pools---stronger than saying ``an equilibrium might not exist."

\section{Background and Preliminaries}

\subsection{Blockchain Pools}

Due to the severe competition for block mining,
blockchain miners form coalitions, or pools, to amortize the cost.
The point is to share the reward and transaction fee with fellow miners in the same pool in a more continuous manner rather than risking a long period of a solo mining before getting anything back,
although the latter indicating a high reward with a probability.
The motivation of pooling together in blockchains is well-justified and at goodwill;
In a perfect world, every pool and miner is a good citizen and behaves as expected.

However, in the real world, there is nothing preventing one or more pools from sending a \textit{spy} miner over to other pools.
The rationale is also justifiable as follows:
if the spy can hide its local mining results,
then the target pool will be less competitive and has a lower chance of winning the race of mining blocks.
There is a downside of this attack, however:
the pool which sends out spies is also less competitive in the sense that its own computational power is cut down.
It is then not hard to see that such a dilemma might significantly affect the pooling strategies, for example,
whether to launch the infiltrating attack,
if so, what is the percentage of computation power should be devoted to the attack,
and so on.

It then becomes a natural question to ask whether such attacks among pools might reach a stable status:
pools will not launch more attacks or withdraw existing attacks---no more spies moving around,
which is called an equilibrium.
One of the pioneering works for this matter appeared in~\cite{ieyal_sp15},
which shows:
(i) no-attack is not an equilibrium: all pools are rational and would not miss the opportunity to legally impair other pools;
(ii) a blockchain of two pools does have a Nash equilibrium; and
(iii) a blockchain of $q > 2$ pools of an equal number of miners also has an equilibrium.

\subsection{Game Theory}

One key question in game theory is whether there exists an equilibrium among rational players who would not change her decision even if the decision is sub-optimal---usually implying that her utility is not maximal. 
The main reason for not switching to a higher-return decision is due to the possibility of a ``very bad'' outcome.
Quantitatively, the players calculate the expectation---a weighted average over all the possible decisions in the simplest form---and make their decisions that might not be the absolutely highest-return one.

The well-known Nash equilibrium is one such decision,
or the so-called solution concept,
in the literature of game theory.
Informally, the Nash equilibrium has a two-fold meaning.
First, the \textit{existence of Nash equilibrium} states that if an individual player has a finite number of strategies,
then there must exist an equilibrium among a finite number of such players.
Second, the \textit{Nash equilibrium of multi-player decisions}, 
as any other solution concepts of equilibrium,
once reached, would not change unless the assumptions are invalidated.
One strong assumption of Nash equilibrium is the isolation of individual players:
the players in the game will simply behave as an entity, and no collaboration is allowed.
This assumption covers many interesting problems, but obviously, not all.
The problems targeted by Nash equilibrium and its variant are called \textit{non-cooperative games}.

When players are allowed to collaborate, 
i.e., to form coalitions to work together for higher utility returns,
the game then becomes a cooperative game,
or better known as a \textit{coalitional game} in the literature of game theory.
Although coalitional games share very similar objectives and challenges as non-cooperative games, 
the solution concepts look quite different. 
For instance, the equilibrium counterpart in coalitional games is called a \textit{core},
which does not always exist.
Some approximations to the core do exist, such as the \textit{bargaining set},
but in general, a coalitional game's equilibrium point,
from existence to calculation,
remains a very challenging problem.

\subsection{Point-Set Topology}

We review the basics in point-set topology in this section.
In the literature, it is also called general topology.
Point-set topology focuses on the abstraction of relationships among elements in a set,
without constraints about the corresponding geometric objects.
Because the relationships are all defined, in an axiomatic manner,
over the abstraction of elements and sets,
the properties and theorems are applied to other branches of topology.
A detailed introduction to point-set topology can be found in textbooks such as~\cite{munkres_book00}.
The definitions we list below are by no means exhaustive;
we only sketch the ones that will be used in later chapters.
We assume the readers are familiar with basic set-theoretical terms.

\begin{itemize}
    \item \textit{Topological Space.} 
    Let $S$ be a set. A set $\mathcal{T}$ of subsets of $S$ is called a topology of $S$ as long as $\emptyset \in \mathcal{T}$ and $S \in \mathcal{T}$. The tuple $(\mathcal{T}, S)$ is called the topological space of $S$.
    If $\mathcal{T} = \{\emptyset, S\}$, then $\mathcal{T}$ is called a \textit{trivial} topology of $S$.
    If the context is clear, we also call the topology induced by $S$ \textit{topology $S$}.
    
    \item \textit{Discrete Topological Space.}
    Let $\mathcal{T}$ be a topology of set $S$.
    If for any subset $S' \subseteq S$ we have $S' \in \mathcal{T}$, then $\mathcal{T}$ is called a \textit{discrete} topology of $S$.
    Essentially, all the elements in $S$ are singleton elements in $\mathcal{T}$;
    also, any combination of those singleton elements are also in $\mathcal{T}$.
    
    \item \textit{Open Set.}
    Let $\mathcal{T}$ be a topology of $S$.
    Any element $O \in \mathcal{T}$ is called an \textit{open set}.
    The complement set $S \setminus O$ is called a closed set.

    \item \textit{Continuous Functions.}
    Let $f$ be a function from topology $\mathcal{A}$ to topology $\mathcal{B}$, 
    $f: \mathcal{A} \rightarrow \mathcal{B}$.
    We call $f$ a \textit{continuous} function if for every open set $O' \in \mathcal{B}$, there exists an open set $O \in \mathcal{A}$ such that $f(O) = O'$.
    
    \item \textit{Injective, Surjective, and Bijective Functions.}
    Let $f$ be a function from domain set $A$ to image set $B$, $f: A \rightarrow B$.
    If for any $a \in A$, $a' \in A$, and $a \neq a'$, we have $f(a) \neq f(a')$,
    then $f$ is \textit{injective}.
    If for every $b \in B$, there is an element $a \in A$ such that $f(a) = b$, then $f$ is \textit{surjective}.
    If $f$ is both injective and surjective, then $f$ is \textit{bijective}.

    \item \textit{Homeomorphism.}
    Let $f$ be a function from topology $\mathcal{A}$ to topology $\mathcal{B}$, 
    $f: \mathcal{A} \rightarrow \mathcal{B}$.
    If $f$ is both bijective and continuous, then $f$ is called a \textit{homeomorphism} of $\mathcal{A}$ and $\mathcal{B}$,
    and $\mathcal{A}$ is called \textit{homeomophic} to $\mathcal{B}$.
\end{itemize}

\subsection{Algebraic Topology}

In contrast to point-set topology, algebraic topology is more concerned with the underlying geometric realizations and algebraic structures.
One approach to study the geometric objects is to decompose them into smaller pieces, the so-called \textit{triangulation},
which does not necessarily break the object down into triangles but \textit{cells}.
If we restrict the shape of a cell to a polyhedron, i.e., only points and lines, 
then the geometric object can be depicted as a discrete approximation.
Each of these polyhedron cells is then called a \textit{simplex}, and the whole object is called a \textit{simplicial complex}.
It turns out that these concepts can also be defined in a set-theoretic context, as we will see shortly.
A delicate and beautiful theory, namely algebraic topology, has been built upon these simple yet powerful definitions, 
and this is where we start our review.
As point-set topology, we only list the concepts and notations of algebraic topology that will be used in later chapters of this paper;
a more complete and rigorous review of algebraic topology can be found in textbooks such as~\cite{munkres_book84}.

\begin{itemize}
    \item \textit{Simplicial Complex, Simplex.}
    Let $S$ be a set. Each element $v \in S$ is called a \textit{vertex}. A \textit{simplicial complex} $\mathcal{C}$ is a set of subsets of $S$ if (i) all the vertices are singleton elements in $\mathcal{C}$ and (ii) for every $X \in \mathcal{C}$,
    all subsets of $X$ are also in $\mathcal{C}$.
    Each element $\sigma \in \mathcal{C}$ is called a \textit{simplex}. 
    In the remaining of the paper, we call a \textit{simplicial complex} a complex if no ambiguity may arise.
    
    \item \textit{Face, Proper Face, Facet.}
    A \textit{face} $\tau$ of simplex $\sigma$ if $\tau \subseteq \sigma$.
    A \textit{proper face} $\tau$ of simplex $\sigma$ if $\tau \subset \sigma$.
    A simplex $\sigma$ is a \textit{facet} of a complex $\mathcal{C}$ if $\sigma$ is not a proper face of any other simplices in $\mathcal{C}$.
    
    \item \textit{Vertex Map.}
    Give two complexes $\mathcal{A}$ and $\mathcal{B}$,
    a \textit{vertex map} is a function $\varphi: V(\mathcal{A}) \rightarrow V(\mathcal{B})$,
    where $V(\cdot)$ returns the set of vertices.
    Note that we do not impose any additional properties of the function $\varphi$;
    For example, it is allowed to map two distinct vertices from $\mathcal{A}$ to the same vertex in $\mathcal{B}$.
    
    \item \textit{Simplicial Map.}
    A vertex map $\varphi$ from $V(\mathcal{A})$ to $V(\mathcal{B})$ is called a \textit{simplicial map} it carries any simplex from $\mathcal{A}$ to a simplex in $\mathcal{B}$.
    That is, if $\{a_0, \ldots, a_n\}$ is a simplex in $\mathcal{A}$, then $\{\varphi(a_0), \ldots, \varphi(a_n)\}$ is a simplex in $\mathcal{B}$.
    
    \item \textit{Carrier Map.}
    A \textit{carrier map} $\Phi$ maps each simplex in complex $\mathcal{A}$ to a subset of complex $\mathcal{B}$, usually called a \textit{subcomplex} all of which constitute a powerset denoted $2^{\mathcal{B}}$.
    To this end, the carrier map is often written as $\Phi: \mathcal{A} \rightarrow 2^{\mathcal{B}}$.
    
    \item \textit{Dimension of Simplices and Complexes $(\mathtt{dim})$.}
    The \textit{dimension of a simplex} $\sigma$ is defined as the number of its vertices minus 1: $\mathtt{dim} (\sigma) = |\sigma| - 1$.
    The reason why this is so defined is due to the geometric intuition:
    a two-vertex simplex can be geometrically viewed as a 1-dimensional line segment,
    a three-vertex simplex can be geometrically viewed as a 2-dimensional triangle, 
    and so on.
    The \textit{dimension of a complex} $\mathcal{C}$ is defined as the dimension of the highest-dimensional simplices in the complex.
    That is, $\mathtt{dim}(\mathcal{C}) = \text{sup}\{\mathtt{dim}(\sigma): \sigma \in \mathcal{C}\}$.
    
    \item \textit{Skeleton of Complexes $(\mathtt{skel}^k)$. }
    A $k$-skeleton of complex $\mathcal{C}$ is a set of simplices in $\mathcal{C}$ whose dimension is up to $k$.
    That is, $\mathtt{skel}^k \mathcal{C} = \{\sigma \in \mathcal{C}: \mathtt{dim}(\sigma) \leq k\}$.
    Following this definition, the vertices of a complex is just its 0-skeleton: $V(\mathcal{C}) = \mathtt{skel}^0 \mathcal{C}$.
    
    \item \textit{Connectivity.}
    This is one of the most important topological properties in the sense that two topological spaces cannot be equivalent (i.e., homeomorphic) if they exhibit different levels of connectivity.
    The most widely-used property, also the one used in later chapters of this paper, is called \textit{path-connected},
    which has a similar definition as graph theory.
    A complex $\mathcal{C}$ is called path-connected if any two vertices can be connected through a sequence of 1-dimensional simplices.
    If the context is clear, we simply call a complex connected if it is path-connected.
    Strictly speaking, path-connectivity is at the degree 0 of the connectivity spectrum, namely 0-connectivity.
    There is higher-degree connectivity:
    a $k$-connectivity indicates that a $(k+1)$-dimensional ball can be continuously mapped to the complex---there is no $(k+1)$-dimensional ``holes'' in the complex.
    
    \item \textit{Subdivision $(\mathtt{div})$.}
    Subdivision refers to finer decomposition of a given simplex, denoted $\mathtt{div}(\cdot)$,
    such that a new set of simplices are generated, and all of them constitute a new complex.
    Two of the most popular subdivisions are \textit{Barycentric} subdivision and \textit{Chromatic} subdivision.
    Barycentric subdivision ($\mathtt{Bary}(\cdot)$) simply takes the Barycentric center of a specific lower-dimensional skeleton, 
    joins the center to existing vertices,
    and repeats this procedure on all levels of skeletons.
    \textit{Chromatic} subdivision ($\mathtt{Ch}(\cdot)$) extends the Barycentric subdivision by introducing joint-centers such that the new vertices cannot directly share a lower-dimensional simplex with existing vertices,
    which turns to be very useful in tasks related to graph coloring.
    
    \item \textit{Joining Complexes (*).}
    In addition to dividing existing simplices, we are also able to ``glue'' together existing complexes through \textit{joining}.
    Given two complexes $\mathcal{A}$ and $\mathcal{B}$,
    the joining of two is a new complex with all the simplices from $\mathcal{A}$ and $\mathcal{B}$.
    Formally, we say $\mathcal{A} * \mathcal{B} = \{\sigma \cup \tau: \sigma \in \mathcal{A}, \tau \in \mathcal{B}\}$.
    This definition can be extended to simplices as well;
    for example, the join of a vertex and a complex is a new complex with the original vertex, the original complex, and all the edges (i.e., 1-dimensional simplices) between the original vertex and every vertex of the original complex.
    
\end{itemize}

\section{Equilibrium of Equal Pools}
\label{sec:equal}

\subsection{Problem Formulation}
\label{sec:equal_problem}

Assume there are $q$ pools, where $q \in \mathbb{Z}^+$, $q \geq 2$.
We call the \textit{size}
Two pools are said to be \textit{equal} if they have the same number of nodes, namely, \textit{size}.
Let the size of each pool be $(n+1)$, $n \in \mathbb{Z}^+$, $n \geq 1$.
Each of $q$ pools is a thus a $n$-simplex, and the input complex $\mathcal{I}$ has $q$ disconnected $n$-simplices, i.e., \textit{components}:
there is no 1- or higher-dimensional simplex among any pair of these components.
These $n$-simplices are disconnected from each other because they know nothing about others.

\begin{definition}[Vertex in $\mathcal{I}$]\label{def_vertex}
Each vertex $v$ in $\mathcal{I}$ is a tuple $(p_i^j, \bot)$,
where $p_i^j$ denotes the $i$-th pool's $j$-th node (miner) and $\bot$ denotes logical false---the miner does not join another pool other than $i$.
Evidently, we have $1 \leq i \leq q$ and $0 \leq j \leq n$.
\end{definition}

The output complex is defined as follows. 
Each vertex $v$ in $\mathcal{O}$ is a tuple $(p_i^j, k)$, where $p_i^j$ is the same as Def.~\ref{def_vertex} and $k$ denotes the $k$-th pool, $1 \leq k \leq q$.
Note that, by such definition, even if $p_i^j$ does not change its originally assigned pool, the vertex will still change the value from $\bot$ to $k$.

\begin{definition}[Auxiliary Operations on Vertices]
\label{def:vertice_operation}
By convention, we define two auxiliary operations to return the process identities and their assignments:
\begin{align*}
&\mathtt{name}(v) = p_i^j,
\\
&\mathtt{view}(v) = k.
\end{align*}
Furthermore, we define the following functions to return the miner's pool and index information:
\begin{align*}
&\mathtt{pool}(v) = i,
\\
&\mathtt{index}(v) = j.
\end{align*}
\end{definition}

Because our goal is to find an equilibrium,
each node $p_i^j$ cannot be intersected by two or more simplices in $\mathcal{O}$.
That is, if $(p_i^j, k) \in \sigma \in \mathcal{O}$ and $(p_i^j, k) \in \tau \in \mathcal{O}$,
then $\sigma = \tau$.
It follows that the output complex $\mathcal{O}$ is a disconnected complex of $N$-simplices where $N = q\cdot(n+1)-1$.
For any $N$-simplex in $\sigma \in \mathcal{O}$, we have
\[
\{\mathtt{index}(v): v\in \sigma\} = [n],
\]
where $[n]$ denotes the set $\{0, \ldots, n\}$.
Similarly, we have
\[
\{\mathtt{view}(v): v\in \sigma\} = [q].
\]

Now we are ready to define the \textit{carrier map} $\Delta$ from complex $\mathcal{I}$ to $\mathcal{O}$,
that is, $\Delta: \mathcal{I} \rightarrow 2^\mathcal{O}$.
Essentially, $\Delta$ ``deforms'' a simplex $\sigma \in \mathcal{I}$ into one possible simplex in a subset of $\mathcal{O}$, i.e., a \textit{subcomplex} $\mathcal{S} \subseteq \mathcal{O}$ in a monotonic way.
By \textit{monotinic}, we require that adding more vertices or higher-dimensional simplices into $\sigma$ would only enlarge the subcomplex $\mathcal{S}$.
If we can successfully construct a carrier map $\Delta$, 
or prove that such carrier map \textit{exists}, 
then we are guaranteed the existence of an equilibrium.

The problem of assigning pools to miners is thus represented as a triple $(\mathcal{I}, \mathcal{O}, \Delta)$.
In the remainder of this section, we will discuss how to find or construct $\Delta$, if it exists.

\subsection{Methodology Overview}

In the literature of algebraic topology, we usually face two types of problems:
(i) To demonstrate that two topological spaces are ``equivalent'' up to \textit{continuous} deformation---retaining the topological properties or the so-called \textit{topological invariant}, and 
(ii) To prove that two topological spaces cannot be equivalent.
For the first type of problem, we can either construct an explicit continuous function to map from the first space to the other or show that both spaces are \textit{homeomorphic} in the sense that their underlying \textit{algebraic} structures (e.g., groups, fields) are \textit{homomorphic}.
For the second type of problem, we are left with the only option of showing that the underlying algebraic structures are not homomorphic.

Our strategy to find the equilibrium for equal pools is as follows.
\begin{enumerate}
    \item We will construct a continuous function, i.e., a \textit{simplicial map} $\varphi$ from a specific $n$-simplex $\sigma \in \mathcal{I}$ to a $N$-simplex $\tau \in \mathcal{O}$. 
    That is, $\varphi: \mathcal{I} \rightarrow \mathcal{O}$.
    \item We will show that all the $n$-simplices in $\mathcal{I}$ are homeomorphic, such that the simplicial map is applicable to all the simplices in $\mathcal{I}$.
    \item Finally, we will prove that for any $\sigma \in \mathcal{I}$, the corresponding simplices in $\mathcal{O}$ constitute a connected subcomplex: 
    \[\displaystyle
    \bigcup_{\sigma \in \mathcal{I}} \varphi(\sigma) \subseteq \mathcal{O}.
    \]
\end{enumerate}

\subsection{Simplicial Map from $\mathcal{I}$ to $\mathcal{O}$}

First, we will rephrase the well-known results of game-theoretic blockchains in the context of algebraic topology. 
Eyal~\cite{ieyal_sp15} showed that a no-attack strategy is not a Nash equilibrium in blockchain pools.
That is to say, in the output complex $\mathcal{O}$, a connected component, which is itself a simplex, cannot have all of its nodes staying in their originally assigned pool.
We thus have our first constraint on the simplices in $\mathcal{O}$,
as stated in the following lemma.

\begin{lemma}
\label{thm:no_stay}
For any simplex $\tau \in \mathcal{O}$,
there exists a vertex $v \in \tau$ such that $\mathtt{name}(v) = p_i^j$ and $\mathtt{view}(v) \neq i$.
\end{lemma}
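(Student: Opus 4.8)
The plan is to argue by contradiction, reducing the claim to Eyal's game-theoretic result that the no-attack profile is never a Nash equilibrium. First I would suppose, toward a contradiction, that there exists a facet $\tau \in \mathcal{O}$ in which every vertex stays in its home pool, i.e. $\mathtt{view}(v) = \mathtt{pool}(v)$ for all $v \in \tau$. Since each $N$-simplex of $\mathcal{O}$ carries all $q(n+1)$ miners (one vertex per index per pool, as forced by the constraints $\{\mathtt{index}(v) : v \in \sigma\} = [n]$ and $\{\mathtt{view}(v) : v \in \sigma\} = [q]$), such a $\tau$ encodes a \emph{complete} assignment in which no miner is reassigned away from its originally assigned pool.

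Next I would make explicit the correspondence between this distinguished simplex and the strategy profile it represents: a vertex $(p_i^j, k)$ records that miner $j$ of pool $i$ ends up contributing to pool $k$, so the condition $\mathtt{view}(v) = \mathtt{pool}(v)$ for every $v \in \tau$ is precisely the profile in which no pool infiltrates any other---the no-attack configuration. Because $\mathcal{O}$ is, by construction, the complex of stable (equilibrium) output assignments, every facet of $\mathcal{O}$ must correspond to an equilibrium profile.

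The crux is then to invoke the cited result: Eyal~\cite{ieyal_sp15} shows that no-attack is strictly dominated, since any rational pool can raise its expected reward by dispatching a spy while its rivals abstain; hence the no-attack profile is not a Nash equilibrium. Consequently the all-stay simplex cannot be a facet of $\mathcal{O}$, contradicting $\tau \in \mathcal{O}$. It follows that every $\tau \in \mathcal{O}$ admits at least one vertex $v$ with $\mathtt{name}(v) = p_i^j$ and $\mathtt{view}(v) \neq i$, which is the assertion.

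I expect the main obstacle to be not the logical skeleton but the bridging step: pinning down that $\mathcal{O}$ genuinely excludes non-equilibrium profiles, since the earlier formulation specifies $\mathcal{O}$ through the structural index/view constraints rather than explicitly as the set of equilibria. I would address this by treating Eyal's no-attack theorem as the defining semantic constraint a simplex must satisfy to qualify as an output configuration, thereby licensing the exclusion of the all-stay simplex and making the contradiction rigorous.
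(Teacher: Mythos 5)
Your proposal is correct and takes essentially the same route as the paper's own proof: a contradiction argument in which the all-stay simplex (every vertex $v$ with $\mathtt{view}(v) = \mathtt{pool}(v)$) is excluded by invoking Eyal's result that the no-attack profile is not a Nash equilibrium. Your closing observation about the bridging step is apt---the paper's definition of $\mathcal{O}$ is indeed purely structural, and the paper likewise treats Eyal's theorem as an external semantic constraint on $\mathcal{O}$ rather than deriving the exclusion from the formal definition---so if anything you are more explicit about this than the paper itself.
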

\begin{proof}
The result is a straightforward implication of the result in~\cite{ieyal_sp15}.
More specifically,
this can be easily verified by contradiction.
If every vertex $v \in \tau$ satisfy the following two conditions:
$\mathtt{name}(v) = p_i^j$ and $\mathtt{view}(v) = i$,
then we know that no node joins another pool in the final pool assignment.
However, this is not possible because a ``stay-at-pool'' strategy is not a Nash equilibrium\footnote{Indeed, here we assume a noncooperative game among individual miners. 
A more general and harder problem is to consider the \textit{coalition} among the miners in the same pool, essentially forming a coalitional game model. We will discuss solution concepts and solvability challenges of coalitional pooling strategies in~\S\ref{sec:sov}.} according to~\cite{ieyal_sp15},
leading to a contradiction.
\end{proof}

Recall that a simplicial map is an extended \textit{vertex map}:
not only does the map apply to $0$-simplices (i.e., vertices),
but also to any $k$-simplices, $0 \leq k \leq n$.
That is, we need to show that there exists a map from $\mathcal{I}$ to $\mathcal{O}$ such that any simplex of dimension $d$ has a corresponding $d$-dimensional simplex in $\mathcal{O}$.
Formally, we will have the following result.

\begin{lemma}
\label{thm_simplicial_map}
There exists a simplicial map from $\mathcal{I}$ to $\mathcal{O}$.
\end{lemma}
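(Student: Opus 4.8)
The plan is to prove existence by an explicit construction: I will exhibit a single vertex map $\varphi : V(\mathcal{I}) \to V(\mathcal{O})$ and then verify directly that it carries every simplex of $\mathcal{I}$ to a simplex of $\mathcal{O}$. Since a simplicial map is completely determined by its action on vertices, the whole argument reduces to (a) choosing the right vertex assignment and (b) checking the carrying condition on each component of $\mathcal{I}$.

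First I would define $\varphi$ so that it preserves the miner's name and merely replaces the placeholder $\bot$ with a concrete pool assignment. Concretely, fix a fixed-point-free permutation $s$ of $[q]$, for instance the cyclic shift $s(i) = (i \bmod q) + 1$, and set
\[
\varphi\big((p_i^j, \bot)\big) = (p_i^j, s(i)).
\]
Because distinct vertices of $\mathcal{I}$ carry distinct names $p_i^j$, this map is injective on vertices, so each $n$-simplex of $\mathcal{I}$ maps to an $n$-simplex of $\mathcal{O}$ rather than collapsing to a lower dimension.

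Next I would verify the simplicial-map condition. The input complex $\mathcal{I}$ is the disjoint union of $q$ components, each an $n$-simplex $\sigma_i = \{(p_i^0, \bot), \ldots, (p_i^n, \bot)\}$, and it contains no simplex spanning two components. Hence it suffices to show that $\varphi(\sigma_i)$ is a simplex of $\mathcal{O}$ for each $i$; every face of $\sigma_i$ then maps to a subset of $\varphi(\sigma_i)$, which is automatically a simplex because $\mathcal{O}$ is closed under subsets. To certify that $\varphi(\sigma_i)$ lies in $\mathcal{O}$, I would exhibit a single facet $\tau^\star \in \mathcal{O}$ containing all these images at once, namely the \emph{global shift}
\[
\tau^\star = \{\, (p_i^j, s(i)) : 1 \le i \le q,\ 0 \le j \le n \,\}.
\]
One then checks the three defining constraints of an $N$-facet on $\tau^\star$: the index set equals $[n]$ because each component already contributes indices $0$ through $n$; the view set equals $[q]$ because $s$ is a permutation; and Lemma~\ref{thm:no_stay} holds because $s$ has no fixed point, so in fact every vertex of $\tau^\star$ has view differing from its pool. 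Since $\varphi(\sigma_i) \subseteq \tau^\star$, it is a face of $\tau^\star$ and therefore a simplex of $\mathcal{O}$.

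The part that needs the most care is the certification step rather than the construction itself: one must confirm that the chosen views are globally consistent with \emph{all} constraints on facets of $\mathcal{O}$ simultaneously, not merely with the local non-stay condition of a single simplex. Using a fixed-point-free permutation is exactly what makes this uniform---it discharges Lemma~\ref{thm:no_stay} for every vertex at once, and, being a bijection, it automatically yields full view-coverage $[q]$. I would close by noting that any other fixed-point-free assignment of views works equally well, so the constructed $\varphi$ is far from unique; existence, which is all the lemma asserts, follows from this single witness.
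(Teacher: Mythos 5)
Your proposal is correct, and it shares the paper's key construction---the name-preserving cyclic shift of pool assignments, which is exactly the paper's condition $\mathtt{view}(u) = (\mathtt{view}(v)+1) \;\mathtt{mod}\; q$---but your verification that this vertex map is simplicial takes a genuinely different and more direct route. The paper proceeds by induction on skeleton dimension: it first defines the vertex map $\varphi_0$ on $\mathtt{skel}^0\mathcal{I}$, then extends a $(d-1)$-dimensional map to a $d$-dimensional one by joining fresh vertices ($\sigma' = \{u\} * \sigma$, $\tau' = \{v\} * \tau$), and finally defines $\varphi$ as a composition of these per-dimension maps. You instead invoke the standard fact that a simplicial map is determined by its vertex map, and discharge the carrying condition in one step: you exhibit a single witness facet $\tau^\star \in \mathcal{O}$ (the global shift) containing the image of every component of $\mathcal{I}$, check that $\tau^\star$ satisfies all three defining constraints of an $N$-facet (index coverage $[n]$, view coverage $[q]$ via $s$ being a permutation, and Lemma~\ref{thm:no_stay} via fixed-point-freeness), and conclude by downward closure of $\mathcal{O}$ that every face of every $\sigma_i$ maps to a simplex. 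This buys simplicity and rigor---the paper's inductive step is somewhat loose about how the joined vertices $u$ and $v$ are chosen consistently with the already-built map, a gap your one-shot argument avoids entirely---and it also makes explicit the subtle point that faces of $\tau^\star$ must themselves satisfy the no-stay constraint, which only holds because the shift moves \emph{every} miner. What the paper's heavier machinery buys in exchange is structural: the decomposition of $\varphi$ into per-dimension maps and joins is reused later (e.g., in the connectivity argument of Lemma~\ref{thm:simplex_to_complex} and the restriction maps $\pi$ in Theorem~\ref{thm:main_equal}), whereas your self-contained witness argument would require those later proofs to be rephrased.
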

\begin{proof}
We prove this lemma by constructing a map by induction.
We will first show that a vertex map exists.
Then, we will assume such a map exists for dimension $k$,
and finally, show that a new map exists for dimension $(k+1)$ based on the $k$-dimensional map.

We define a map $\varphi_0: \mathtt{skel}^0 \mathcal{I} \rightarrow \mathtt{skel}^0 \mathcal{O}$ as follows.
Recall that $\mathtt{skel^k}$ indicates the set of simplices of dimension of up to $k$;
therefore, $\mathtt{skel}^0 \mathcal{C}$ simply represents the set of vertices of complex $\mathcal{C}$.
Recall that the dimension of $\mathcal{O}$ is much larger that of $\mathcal{I}$:
\[\mathtt{dim}(\mathcal{O}) = N = q\cdot(n+1)-1 > n = \mathtt{dim}(\mathcal{I}).
\]
Therefore, it is essential for $\varphi_0$ to map $q$ simplices from $\mathcal{I}$ to the same simplex in $\mathcal{O}$.
More specifically, we need to construct the map for all the vertices of $q$ simplices in $\mathcal{I}$.
Without loss of generality, in the following we will work on the $0$-th $n$-simplex, namely $\sigma_0$, such that $\varphi_0(\mathtt{skel}^0\sigma_0) \in \mathcal{O}$.

Recall that by definition, the output complex $\mathcal{O}$ comprises $N$-simplices with all the possible pool assignments as long as the miners are distinct.
We define the simplicial map $\varphi_0: \sigma \mapsto \tau$ based on the following two conditions:
\begin{enumerate}
    \item $\{\mathtt{name}(u): u \in \sigma \} = \{\mathtt{name}(v): v \in \tau \}$
    \item $\mathtt{view}(u) = (\mathtt{view}(v) + 1)  \mathtt{\;mod\;} q$
\end{enumerate}
Condition (1) ensures that the simplicial map $\varphi$ is \textit{name preserving},
a requirement of any protocol that can solve the problem.
Condition (2) guarantees that the resulting simplex is valid, according to Lemma~\ref{thm:no_stay}.
Intuitively, all the vertices in $\sigma_0$ move the ``next'' pool in the loop of $q$ simplices.
Note that this map is well-defined because all the pools are of equal size, as part of the assumption.
It turns out that other simplicial maps $\varphi_k$, $1 \leq k < q$ can be similarly constructed as $\varphi_0$,
and they are all well-defined due to the equal cardinality of pools.
Therefore, $\forall \sigma_k$, $0 \leq k < q$, we can always use $\varphi_0$ to map all the vertices of $q$ simplices into the same simplex in $\mathcal{O}$.
We have thus successfully built the base for the induction, i.e., $\mathtt{dim}(\mathtt{skel}^0 \mathcal{I}) = 0$.

It remains to show that higher-dimensional faces of $\sigma \in \mathcal{I}$ and $\tau \in \mathcal{O}$ can also be mapped through $\varphi$.
Suppose we have found a simplicial map $\varphi_{d-1}: \mathtt{skel}^{d-1} \sigma \rightarrow \mathtt{skel}^{d-1} \tau$, $1 \leq d \leq n$, which is indeed the case for 0-dimension as shown above,
we will need to show a $d$-dimensional varient of the simplical map $\varphi_d$ can map $\mathtt{skel}^d \sigma$ to $\mathtt{skel}^d \tau$.
Let $\sigma$ and $\tau$ be two $(d-1)$-dimensional simplices in $\mathcal{I}$ and $\mathcal{O}$, respectively.
Because $d-1 < \mathtt{dim}(\mathcal{I}) = n < \mathtt{dim}(\mathcal{O}) = N$, there must exist two vertices $u \in \mathcal{I}$ and $v \in \mathcal{O}$, respectively, such that $u \not\in \sigma$ and $v \not\in \tau$.
We construct a $d$-dimensional simplex $\sigma'$ as the joining of $u$ and $\sigma$, and do that for $v$ and $\tau$ in a similar fashion:
\begin{align*}
&\sigma' = \{u\} * \sigma,
\\
&\tau' = \{v\} * \tau.
\end{align*}
Recall that $*$ denotes the join operation between two simplices.
Evidently, we will have $\sigma' \in \mathtt{skel}^d \mathcal{I}$ and $\tau' \in \mathtt{skel}^d \mathcal{O}$,
for which the map $\varphi_d$ can be applied.

Finally, we define $\varphi$ as the composition of a series of \textit{join} operations and lower-dimensional maps $(\varphi_0, \ldots, \varphi_{n-1})$.
By construction, $\varphi$ is a simplicial map from $\mathcal{I}$ to $\mathcal{O}$.
\end{proof}

Intuitively speaking, Protocol~\ref{thm_simplicial_map} states that it is possible to merge multiple initial pool-local views into a coherent view in the final pool assignment.
Next, we will show that this property is invariant as long as the pools are kept equal in the input.

\subsection{Homeomorphic Simplices in $\mathcal{I}$}

In this section, we will show that all the $n$-simplices in $\mathcal{I}$ are \textit{homeomorphic},
meaning that all these components are essentially equivalent in the sense that one component can deform into another with continuous functions.
This property will be crucial when we prove that more than one equilibrium must exist in an equal-pool blockchain network later in~\S\ref{sec:equal_game_remark}.
The remainder of this section will demonstrate that if $\mathcal{I}$ is an equal-pool blockchain network,
then any two pools are homeomorphic.
Formally, we will prove the following lemma.

\begin{lemma}
\label{thm:homeo}
Suppose $\mathcal{I}$ is an equal-pool blockchain's input complex. For any $\sigma, \tau \in \mathcal{I}$, $\sigma$ and $\tau$ are homeomorphic. 
\end{lemma}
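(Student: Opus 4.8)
The plan is to exhibit an explicit index-matching bijection between the vertex sets of the two pools and then argue that, because each pool is a \emph{full} $n$-simplex (every subset of its $n+1$ vertices is again a face), this vertex bijection extends to a simplicial isomorphism and hence induces a homeomorphism of the underlying spaces. I read the statement as referring to the facets of $\mathcal{I}$, i.e., the $q$ disconnected components, each of which is an $n$-simplex; two simplices of \emph{different} dimension can never be homeomorphic, so this is the only interpretation under which the claim holds, and it is exactly what the surrounding text (``any two pools are homeomorphic'') intends.

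First I would recall from the problem formulation that each component of $\mathcal{I}$ is the $n$-simplex on the vertices $\{(p_i^0,\bot),\dots,(p_i^n,\bot)\}$ for a fixed pool index $i$, and that, by the definition of a simplicial complex, every subset of these vertices is itself a simplex. Hence both $\sigma$ (in pool $i$) and $\tau$ (in pool $i'$) carry exactly $n+1$ vertices, distinguished by the $\mathtt{index}$ values $0,\dots,n$, and are combinatorially complete.

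Next I would define the vertex map $h$ that matches miners by index, $h\big((p_i^j,\bot)\big) = (p_{i'}^j,\bot)$, so that $\mathtt{index}(h(v)) = \mathtt{index}(v)$ for every vertex $v \in \sigma$. Since $\mathtt{index}$ ranges over $[n]$ on both pools, $h$ is a bijection of vertex sets. Because each pool is a full simplex, $h$ carries every face $\{v_{j_0},\dots,v_{j_d}\}$ of $\sigma$ to the face $\{h(v_{j_0}),\dots,h(v_{j_d})\}$ of $\tau$, and $h^{-1}$ does the same in the reverse direction; thus $h$ is a simplicial map that is bijective on simplices, i.e., a simplicial isomorphism. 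I would then invoke the standard fact (from the cited algebraic-topology text) that a simplicial isomorphism induces a homeomorphism between the geometric realizations $|\sigma|$ and $|\tau|$, and verify that the induced map meets the continuity condition stated earlier in the excerpt: every open set of $|\tau|$ is the image under the realization of $h$ of an open set of $|\sigma|$, and symmetrically for $h^{-1}$. This produces a bijective continuous map with continuous inverse, which satisfies (indeed exceeds) the paper's definition of homeomorphism, so $\sigma$ and $\tau$ are homeomorphic.

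The step I expect to be the main obstacle is this last one: bridging the purely combinatorial bijection to the topological notion of homeomorphism under the paper's somewhat nonstandard formulation of continuity. Concretely, one must first fix which topology the simplices carry — the subspace topology on the geometric realization rather than any discrete combinatorial topology — and then check the image-of-open-set requirement in both directions, instead of treating the vertex bijection as automatically topological. Everything else (the index bijection, closure under faces, and the counting $|\sigma| = |\tau| = n+1$) is routine and follows directly from the equal-pool assumption.
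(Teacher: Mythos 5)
Your proposal is correct, and it rests on exactly the same combinatorial engine as the paper's proof: the index-matching vertex bijection $v \mapsto$ (vertex of the other pool with the same $\mathtt{index}$ value), which exists precisely because equal pools have the same index range $[n]$. Where you diverge is in the topological superstructure you build on top of that bijection. The paper does the opposite of what you recommend in your final paragraph: it equips each pool with precisely the \emph{discrete combinatorial topology} you set aside, declaring the topological space of $\sigma$ to be $\mathcal{T}_\sigma = 2^{\mathtt{skel}^0 \sigma}$, and then defines $f$ directly on subsets by $f\colon U \mapsto V$ iff $\mathtt{index}(U) = \mathtt{index}(V)$, checking injectivity, surjectivity, and the paper's image-of-open-set notion of continuity in both directions by the same index-matching argument. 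Under that discrete topology these checks are nearly immediate (any bijection of discrete spaces is a homeomorphism), so the paper's proof is short but proves a correspondingly weak statement --- a homeomorphism of finite discrete spaces, not of geometric simplices. Your route via geometric realization is the more standard and stronger one: the simplicial isomorphism induced by $h$ yields a genuine homeomorphism $|\sigma| \cong |\tau|$ of the realized $n$-simplices. The trade-off is that it invokes machinery (geometric realization and the fact that simplicial isomorphisms realize to homeomorphisms) that the paper never sets up, whereas the paper's argument stays entirely within its own point-set definitions. Both versions deliver what is actually used downstream (in Theorem~\ref{thm:main_equal}, where the pools must be interchangeable up to continuous deformation), and your observation that the claim can only refer to the facets --- since simplices of different dimension are never homeomorphic --- is exactly the reading the paper intends and is what breaks in the unequal-pool case of \S\ref{sec:sov}.
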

\begin{proof}
Here we overuse the Greek letters, e.g., $\sigma$ and $\tau$, to represent both the simplices and topological spaces.
In the latter case, the \textit{discrete} space induced by the vertices of the simplex.
That is, topological space $\sigma$ consists of all the subsets of vertices in $\sigma$:
\[
\mathcal{T}_\sigma = 2^{\mathtt{skel}^0 \sigma}.
\]
If the context is clear, we simply write $\sigma$ to indicate its topological space $\mathcal{T}_\sigma$.

We need to show that there is a function between two topological spaces,
i.e., $f: \sigma \rightarrow \tau$, such that:
\begin{enumerate}
    \item $f$ is bijective
    \item $f$ is continuous
    \item $f^{-1}$ is continuous
\end{enumerate}

Let $U \subseteq \sigma$ be any subset.
Denote its cardinality $|U| = m+1$, $-1 \leq m \leq n$, then $U$ can be rewritten as
\[
U = \{u_0, \ldots, u_m\},
\]
where $U = \emptyset$ if $m < 0$.
For each element $u \in U$, $u$ is a vertex defined in Def.~\ref{def_vertex}.
Recall that $\mathtt{index}(u)$ returns the local index of vertex $u$ in the pool identified by $\mathtt{pool}(u)$.
We extend the definition of $\mathtt{index}$ on a single vertex defined in Def.~\ref{def:vertice_operation} to a set:
\[
\mathtt{index} (U) = \{\mathtt{index} (u): u \in U\}.
\]
Evidently, since all pools are equal in size by assumption,
for any $\mathtt{index} (U)$, we can always have a $\mathtt{index} (V)$, where $V \subseteq \tau$, such that $\mathtt{index} (V) = \mathtt{index} (U)$.
This is exactly the condition based on which we define $f$:
\[
f: U \mapsto V \mathtt{\;iff\;} \mathtt{index} (U) = \mathtt{index} (V).
\]
Now, we need to verify that $f$ defined above is bijective and continuous in both directions.

To show that $f$ is bijective, we need to show that $f$ is both \textit{injective} and \textit{surjective}.
For any two subsets of $\sigma$, $U_1$ and $U_2$, $U_1 \neq U_2$.
Without loss of generality, suppose $u \in U_1$ and $u \not\in U_2$.
Then, we will have a vertex $v \in V_1 = f(U_1)$ such that $\mathtt{index}(v) = \mathtt{index}(u)$.
Since $u \not\in U_2$, we know $v \not\in V_2 = F(U_2)$,
meaning that $V_1 \neq V2$.
Therefore, we just show:
\[
U_1 \neq U_2 \implies f(U_1) \neq f(U_2),
\]
thus $f$ is injective.
Suppose $V$ is any subset of $\tau$, we will show that there must be a domain $U \in \sigma$ whose image comprises $V$---if this is true, then $f$ is surjective.
This can be trivially verified: 
because pools are all equal and the miners within each pool are following the same naming convention (cf. Def.~\ref{def:vertice_operation}), then for any set of indices $\mathtt{index}(V)$,
there must be a subset $U \subseteq \sigma$ such that $\mathtt{index}(U) = \mathtt{index}(V)$,
rendering $f$ a surjective function.
Since $f$ is both injective and surjective, we know $f$ is a bijective function.

Next, we show that $f$ is a continuous function.
That is to say, for any open set $V \subseteq \tau$,
we have an open set $U \subseteq \sigma$ such that $f(U) = V$.
Recall that an open set is defined as any subset included in the topological space.
A simplex, by definition, has all of its subsets as open sets\footnote{In topology, these open sets are also \textit{closed sets} since they are all complementary to the open sets. We do not explicitly define closed sets in the main text since we do not need closed sets in our proofs.} in the induced discrete topological space.
The continuity property can be similarly demonstrated as we proved the covering property (i.e., $f$ is surjective) before.
The indices of open set $V$ is $\mathtt{index}(V)$.
Because pools are all equal and following the same naming convention,
there must be an open set $U \subseteq \sigma$ such that $\mathtt{index}(U) = \mathtt{index}(V)$.

Finally, we need to show that the inverse function $f^{-1}: V \rightarrow U$ is also continuous.
We skip the detailed proof here because it follows exactly the same reasoning as we proved the continuity of $f: U \rightarrow V$.

Given that $f$ is both bijective and continuous, and $f^{-1}$ is also continuous, we have found a homeomorphim $f$ for $\sigma$ and $\tau$.
Since both $\sigma$ and $\tau$ are arbitrary simplices of $\mathcal{I}$, 
we conclude that all simplices in $\mathcal{I}$ are homeomorphic.
\end{proof}

\subsection{Connected Subcomplex in $\mathcal{O}$}

In the previous two subsections, we have shown that (i) there exists a simplicial map from the input complex to the output complex and (ii) all the simplices in the input complex are homeomorphic.
The last piece for our main theorem,
which will be presented in~\S\ref{sec:equal_game_remark},
is demonstrating that the mapped simplices in the output complex indeed constitute a connected subcomplex.
Formally, we need to prove the following lemma.

\begin{lemma}
\label{thm:simplex_to_complex}
Let $\varphi: \mathcal{I} \rightarrow \mathcal{O}$ be the simplicial map defined in Lemma~\ref{thm_simplicial_map}.
We define the union of the mapped simplices as $\mathcal{S}$:
\[
\mathcal{S} = \bigcup_{\sigma \in \mathcal{I}} \varphi(\sigma),
\]
then $\mathcal{S} \subseteq \mathcal{O}$ and $S$ is connected.
\end{lemma}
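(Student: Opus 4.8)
The plan is to prove that $\mathcal{S}$ is a subcomplex of $\mathcal{O}$ and that it is path-connected by exhibiting explicit $1$-dimensional simplices that link the images of adjacent input components. First I would verify the containment $\mathcal{S} \subseteq \mathcal{O}$. By Lemma~\ref{thm_simplicial_map}, $\varphi$ is a simplicial map, so for every $\sigma \in \mathcal{I}$ the image $\varphi(\sigma)$ is a simplex of $\mathcal{O}$; taking the union over all $\sigma$ keeps every element inside $\mathcal{O}$. To confirm $\mathcal{S}$ is genuinely a subcomplex (closed under taking faces), I would note that whenever $\sigma' \subseteq \sigma$ in $\mathcal{I}$, simpliciality gives $\varphi(\sigma') \subseteq \varphi(\sigma)$, so all faces of a mapped simplex are themselves images of faces of $\sigma$ and hence lie in $\mathcal{S}$.

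The core of the argument is path-connectivity. Here the key structural fact is the definition of $\varphi_0$ in Lemma~\ref{thm_simplicial_map}: a vertex with $\mathtt{view}$ value $k$ is carried to a vertex whose $\mathtt{view}$ is $(k+1) \bmod q$, so the $q$ input components $\sigma_0, \ldots, \sigma_{q-1}$ are rotated cyclically through the pools. The plan is to show that the images of any two input components share a common vertex, or more strongly, can be joined by an edge in $\mathcal{O}$. Concretely, I would argue that because all pools are equal in size and the map is name-preserving (Condition~(1)), the output simplices produced from distinct components reference overlapping sets of miner names that have been reassigned consistently across the cyclic rotation. Picking two vertices $v \in \varphi(\sigma_a)$ and $w \in \varphi(\sigma_b)$, I would trace a path through the intermediate components: since the rotation is a single $q$-cycle on the views, the sequence $\sigma_a, \sigma_{a+1}, \ldots, \sigma_b$ furnishes a chain of simplices each consecutive pair of which overlaps, and overlapping simplices in a complex are automatically connected by a $1$-simplex. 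Composing these local connections yields a path of $1$-dimensional simplices from $v$ to $w$, which is exactly the definition of path-connectivity given in the preliminaries.

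The main obstacle I anticipate is establishing that consecutive mapped simplices actually \emph{overlap} in $\mathcal{O}$, rather than merely sitting inside the same large complex. The recurring assumption that ``all pools are equal in size'' guarantees that $\varphi_0$ is well-defined, but well-definedness alone does not force the images to share vertices; I would need to pin down precisely which vertex of $\mathcal{O}$ a given input vertex lands on and then check that two rotated copies collide on at least one coordinate. A clean way to handle this is to observe that each facet of $\mathcal{O}$ is an $N$-simplex whose vertex set exhausts all $q$ pools (recall $\{\mathtt{view}(v) : v \in \sigma\} = [q]$ for any facet), so the cyclic rotation merely permutes which miners occupy which pool slot within a \emph{single} ambient facet; if the construction in fact lands all $q$ rotated images inside one common $N$-simplex of $\mathcal{O}$, then $\mathcal{S}$ is contained in a single simplex and is trivially connected, since every simplex is a connected complex. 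I would therefore try to prove the stronger claim that $\mathcal{S}$ lies within one facet, and fall back to the chain-of-overlaps argument only if the rotations escape a single facet. The delicate point throughout is keeping the name-preservation and the modular view-shift compatible so that the same miner never appears with two different views in $\mathcal{S}$, which is what the intersection condition on $\mathcal{O}$ (if $(p_i^j,k)$ lies in two simplices then they coincide) is meant to enforce.
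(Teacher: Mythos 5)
Your final strategy---proving the stronger claim that all $q$ rotated images land inside a single common $N$-simplex (facet) of $\mathcal{O}$, so that $\mathcal{S}$ sits inside one connected simplex---is essentially the paper's own argument, which establishes exactly this fact by contradiction from the name-preservation condition of Lemma~\ref{thm_simplicial_map} together with the pairwise disjointness of the facets of $\mathcal{O}$. Your instinct to prioritize this over the chain-of-overlaps fallback is also sound: since $\varphi$ is name-preserving and distinct input components contain distinct miners, the images $\varphi(\sigma_a)$ and $\varphi(\sigma_b)$ are vertex-disjoint, so consecutive images never overlap and that fallback route would not go through.
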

\begin{proof}
The first part of the proof is trivial.
We need to show that for any face $\tau \in \mathcal{S}$,
then $\tau \in \mathcal{O}$.
Because $\mathcal{S}$ is a union of $\varphi(\sigma)$,
it is suffice to show that $\varphi(\sigma) \in \mathcal{O}$.
This is evidently true because that is how $\varphi$ is constructed in Lemma~\ref{thm_simplicial_map}.

The second part, the connectedness of $\mathcal{S}$, needs more work, though.
Recall that $\varphi$ is a composition of $n+1$ maps each of which corresponds to a specific dimension $k$, $0 \leq k \leq n$.
Therefore, by construction, each $\varphi(\sigma)$ is $(n-1)$ connected:
we can always have a $n$-dimensional disk that is homeomorphic to $\varphi(\sigma)$.
It remains to show that there is at least one path connecting two simplices in $\mathcal{S}$.
We will prove this by contradiction.

Suppose there does not exist a path between two simplices $\tau_1 \in \mathcal{S}$ and $\tau_2 \in \mathcal{S}$.
Then $\tau_1$ and $\tau_2$ cannot coexist in the same simplex of $\mathcal{O}$ because all the simplices in $\mathcal{O}$ are $N$-dimensional: 
missing a path between $\tau_1$ and $\tau_2$ would have disqualified any of the two serving in the same simplex in $\mathcal{O}$.
Let $\varphi(\sigma_1) = \tau_1$ and $\varphi(\sigma_2) = \tau_2$.
It follows that $\sigma_1$ and $\sigma_2$ are not mapped into the same simplex in $\mathcal{O}$ by $\varphi$.
However, every $\varphi$ guarantees that
\[
\{\mathtt{name}(u): u \in \sigma \} = \{\mathtt{name}(v): v \in \tau \},
\]
as shown in Lemma~\ref{thm_simplicial_map}.
This means that a fixed $\varphi$ must map all the $q$ simplices in $\mathcal{I}$ into the same $N$-dimensional simplex in $\mathtt{O}$,
leading to a contradiction.
\end{proof}

\subsection{Equilibrium of Equal-Pool Blockchains}
\label{sec:equal_game_remark}

We are now ready to state the main result for equal-pool blockchains.
Specifically, we will show that there is a protocol to continuously transform the input complex into the output complex.

\begin{theorem}
\label{thm:main_equal}
If all blockchain pools are equal, then there exist protocols to allow all miners reach equilibrium.
\end{theorem}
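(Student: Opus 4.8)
The plan is to cast the equilibrium question as the solvability of the decision task $(\mathcal{I}, \mathcal{O}, \Delta)$ set up in \S\ref{sec:equal_problem}, and then to invoke the standard combinatorial-topological characterization of solvability: a task of this form admits a protocol precisely when there is a carrier-respecting simplicial map from $\mathcal{I}$ into $\mathcal{O}$ whose image is a single connected subcomplex of $\mathcal{O}$ compatible with $\Delta$. The three lemmas already assembled supply exactly the ingredients that this characterization demands, so the theorem will follow by stitching them together and checking the compatibility conditions, rather than by any fresh construction.

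First, I would recall the formalization: a protocol that drives all miners to a stable pool assignment is, by definition, a map carrying each input simplex $\sigma \in \mathcal{I}$ (a pool-local view) to an output simplex inside the prescribed subcomplex $\Delta(\sigma) \subseteq \mathcal{O}$, subject to name preservation and the no-stay requirement. Solvability therefore reduces to exhibiting a simplicial map $\varphi : \mathcal{I} \rightarrow \mathcal{O}$ with $\varphi(\sigma) \in \Delta(\sigma)$ for every $\sigma$, together with the demand that $\bigcup_{\sigma} \varphi(\sigma)$ be one connected component of $\mathcal{O}$ — a coherent global assignment rather than a disconnected patchwork of incompatible local decisions.

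Second, I would feed in the lemmas in order. Lemma~\ref{thm_simplicial_map} produces the required map $\varphi$, built from the cyclic shift $\mathtt{view}(u) = (\mathtt{view}(v)+1) \bmod q$; by construction it is name-preserving, and via Lemma~\ref{thm:no_stay} it lands in valid output simplices, so $\varphi(\sigma) \in \Delta(\sigma)$ holds monotonically. Lemma~\ref{thm:simplex_to_complex} then certifies that the image $\mathcal{S} = \bigcup_{\sigma} \varphi(\sigma)$ is a connected subcomplex of $\mathcal{O}$; since every output simplex is $N$-dimensional with $N = q(n+1)-1$, connectivity forces the $q$ input components to merge into a single $N$-simplex, which is exactly a consistent equilibrium assignment. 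This already yields one protocol. To justify the \emph{plural} ``protocols'' in the statement, I would lean on Lemma~\ref{thm:homeo}: because all $n$-simplices of $\mathcal{I}$ are homeomorphic and the pools share a common size, each cyclic rotation $\mathtt{view} \mapsto (\mathtt{view}+r) \bmod q$ with $1 \le r < q$ is equally well-defined and yields a distinct connected $\mathcal{S}_r \subseteq \mathcal{O}$, hence a distinct equilibrium.

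The hard part will be justifying the bridge itself — the equivalence between ``a protocol reaching equilibrium exists'' and ``a carrier-respecting simplicial map into a connected subcomplex exists.'' This is the analogue of the asynchronous computability theorem, and the delicate point is the possible need for a chromatic subdivision of the input before a simplicial map can be found; the clean \emph{direct} map of Lemma~\ref{thm_simplicial_map} works here only because the equal-size hypothesis renders the cyclic shift well-defined on $\mathcal{I}$ itself, so no nontrivial subdivision is required. I would therefore devote most of the argument to verifying that $\varphi$ respects the carrier constraint $\varphi(\sigma) \in \Delta(\sigma)$ and that the connectedness supplied by Lemma~\ref{thm:simplex_to_complex} is precisely the topological invariant separating a genuine equilibrium from an inconsistent collection of local views, leaving the game-theoretic restatement as an immediate corollary.
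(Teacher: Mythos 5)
Your assembly of the lemmas has a genuine gap, and it sits exactly where the paper's own proof does its real work. You claim that the global simplicial map $\varphi : \mathcal{I} \rightarrow \mathcal{O}$ of Lemma~\ref{thm_simplicial_map}, together with the connectedness of its image $\mathcal{S}$ from Lemma~\ref{thm:simplex_to_complex}, ``already yields one protocol.'' The paper explicitly denies this: since the $q$ simplices $\sigma_1, \ldots, \sigma_q$ of $\mathcal{I}$ are pairwise disconnected components while $\varphi$ sends them all into a single connected $N$-simplex of $\mathcal{O}$, the input space cannot continuously deform onto the image---connectedness is a topological invariant---so $\varphi$ by itself induces no protocol. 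The decisive move in the paper is to repair this by restricting the map to one connected component, $\pi = \varphi|_{\sigma_1}$, invoking Lemma~\ref{thm:homeo} to write $\varphi$, up to homeomorphism of the identical pools, as the $q$-fold composition $\pi^q$, and then observing that $\pi$ is a simplicial map from the connected complex $\sigma_1$ to the connected complex $\pi(\sigma_1) \subseteq \mathcal{S}$, hence continuous, hence realizable as a protocol. In your write-up Lemma~\ref{thm:homeo} is demoted to a garnish (counting rotations $r$ to obtain several equilibria), whereas in the paper it is load-bearing: without it one cannot replace the disconnected domain $\mathcal{I}$ by the connected domain $\sigma_1$, and the argument collapses at precisely the step you marked as already finished.

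A second, related gap: you outsource protocol existence to a ``standard combinatorial-topological characterization of solvability'' in the style of the asynchronous computability theorem, and you yourself concede that justifying this bridge is the hard part. The paper never states, cites, or proves such a characterization; its entire notion of ``a protocol exists'' is that a simplicial (hence continuous) map between two \emph{connected} complexes realizes a continuous transformation of the input into the output. So the bridge you defer is unavailable in the paper's framework, and your proposal does not supply it either; what is actually needed is the connected-domain/connected-image argument via $\pi$ described above. (A minor further discrepancy: the paper's count of ``at least $q$ equilibria'' comes from choosing which simplex $\sigma_i$ to restrict $\varphi$ to, not from the $r$-fold rotations you describe, though that is cosmetic compared with the main gap.)
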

\begin{proof}
By Lemma~\ref{thm_simplicial_map}, the simplicial map $\varphi$ maps simplices $\{\sigma_1, \ldots, \sigma_{q}\}$ in $\mathcal{I}$ into the same simplex $\tau$ in $\mathcal{O}$.
However, because $\sigma_i$'s are disconnected to each other, $1 \leq i \leq q$,
the corresponding topological space cannot continuously deform to the topological space induced by $\tau$.
That is, there does not exist a protocol induced by $\varphi$ for miners to reach equilibrium.

To fix the disconnected components, we can limit $\varphi$ to a subset of its domain---a specific simplex in $\mathcal{I}$.
Without loss of generality, let the specific simplex be the first one out of the $q$ simplices $\sigma_1$:
\[
\pi = \varphi |_{\sigma_1}.
\]
According to Lemma~\ref{thm:homeo}, all $\sigma_i$, $1 \leq i \leq q$ are homeomorphic.
Therefore, up the continuous transformation, $\varphi$ can be represented by $q$ times of $\pi$:
\[
\pi^q(\sigma_1) = \varphi(\sigma), 
\]
where 
\[\pi^q(\cdot) = \underbrace{\pi \circ \ldots \circ \pi}_{q}(\cdot).
\]

Note that $\sigma_1$ is connected by definition,
which is a $n$-dimensional simplex.
Now, we have a map $\pi: \sigma_1 \rightarrow \mathcal{O}$.
By Lemma~\ref{thm:simplex_to_complex}, we know $\pi(\sigma_1) \in \mathcal{S} \subseteq \mathcal{O}$ is connected.
Therefore, both the domain and image sets of $\pi$ are connected.
Because $\pi$ is deduced from $\varphi$ that is a simplicial map,
$\pi$ must be a simplicial map as well.
A simplicial map between two connected complexes is obviously continuous. Thus a protocol must exist to transform the input complex into the output complex.
\end{proof}

Theorem~\ref{thm:main_equal} and its proof shows that for any blockchain with equal pools,
there are at least $q$ equilibria because $\pi$ can be defined on up to $q$ distinct simplices in $\mathcal{I}$.
This is a stronger result than~\cite{ieyal_sp15} in the sense that the lower bound of the number of equilibria is elevated from one to $q$.
Nonetheless, it is arguable that equal pools might not be a practical assumption in real-world blockchain practices.
To that end, we will switch our focus on the pooling strategies where the sizes are arbitrary.

\section{Solvability of Arbitrary Pools}
\label{sec:sov}

If we relax the requirement of equal pools in blockchains,
we cannot any more reason about the equilibrium following the techniques in~\S\ref{sec:equal}:
Lemma~\ref{thm:homeo} becomes invalid.
To see this, consider one of the conditions for a homeomorphism $f$;
$f$ must be bijective between $\sigma$ and $\tau$.
However, if not all pools have the same size, then there must be two simplices $\sigma$ and $\tau$ such that:
\[
|\mathtt{skel}^0 \sigma| \neq |\mathtt{skel}^0 \tau|.
\]
Therefore, there cannot exist a bijective function between two sets of different cardinalities.
In the remainder of this section, we will show that a general blockchain-pooling problem does not have an equilibrium.
As before, we start with the problem formulation.

\subsection{Problem Formulation}

Assume there are $q$ pools, where $q \in \mathbb{Z}^+$, $q \geq 2$.
Let $\{\sigma_1, \ldots, \sigma_q\}$ denote such $q$ pools.
At least two pools have distinct numbers of miners.
Without loss of generality, let $|\sigma_1| \neq |\sigma_2|$.
In general, we pose no requirement on the equality of cardinality of other pools.
To start with an easier problem, we assume
\[
|\sigma_1| = n + 1 > m + 1 = |\sigma_k|, \; 2 \leq k \leq q.
\]
That is, we have one $n$-simplex and $(q-1)$ lower-dimensional $m$-simplices in the blockchain pools.
Moreover, all of these $q$ simplices are disconnected pair-wise.

Note that although this condition makes the problem easier than the original one, the condition is strong enough to invalidate Lemma~\ref{thm:homeo}.
Also, if we can show that this ``easier'' problem does not have an equilibrium, neither does the original problem.

As the equal-pool problem, each vertex $v$ in $\mathcal{I}$ is a tuple $(p_i^j, \bot)$,
where $p_i^j$ denotes the $i$-th pool's $j$-th node (miner) and $\bot$ denotes logical false---the miner does not join another pool other than $i$.
Evidently, we have $1 \leq i \leq q$ and $0 \leq j \leq n$.

The output complex is defined similarly. 
Each vertex $v$ in $\mathcal{O}$ is a tuple $(p_i^j, k)$, where $p_i^j$ is the same as input complex and $k$ denotes the $k$-th pool, $1 \leq k \leq q$.
Note that, by such definition, even if $p_i^j$ does not change its originally assigned pool, the vertex will still change the value from $\bot$ to $k$.

Because we are concerned with an equilibrium,
each node $p_i^j$ cannot be intersected by two or more simplices in $\mathcal{O}$.
That is, if $(p_i^j, k) \in \sigma \in \mathcal{O}$ and $(p_i^j, k) \in \tau \in \mathcal{O}$,
then $\sigma = \tau$.
It follows that the output complex $\mathcal{O}$ is a disconnected complex of $N$-simplices where $N = n + (q-1)\cdot(m+1)$.
For any $N$-simplex in $\sigma \in \mathcal{O}$, we have
\[
\{\mathtt{view}(v): v\in \sigma\} = [q].
\]
However, the $\mathtt{index}(\cdot)$ would look different for $\sigma_1$ and other simplices:
\[\displaystyle
\{\mathtt{index} (\sigma_x) \} = 
\begin{cases}
    [n],& x = 1,\\
    [m],              & \text{otherwise}.
\end{cases}
\]

We can define the carrier map $\Delta$ from complex $\mathcal{I}$ to $\mathcal{O}$, similarly to the equal-pool counterpart.
The monotonic property of $\Delta$ still holds in this arbitrary-pool problem.
We will show that such $\Delta$ cannot exist due to some topological invariant in the remainder of this section.

\subsection{Methodology Overview}

Our approach to demonstrate that an equilibrium does not exist is built upon the findings in the equal-pool counterpart presented in~\S\ref{sec:equal}.
Specifically, we will split the $q$ simplices in the input complex $\mathcal{I}$ into two subcomplexes:
(i) The first subcomplex $\mathcal{I}_1$ consists of only one simplex $\sigma_1$;
(ii) The second subcomplex $\mathcal{I}_2$ consists of $(q-1)$ $m$-simplices: $\{\sigma_2, \ldots, \sigma_q\}$.
Then, according to Theorem~\ref{thm:main_equal}, $\mathcal{I}_2$ can lead to at least $(q-1)$ equilibria, 
and we can rewrite the simplicial maps with a power of limited maps.
It remains to show that the extended problem with $\mathcal{I} = \mathcal{I}_1 \cup \mathcal{I}_2$ has no way to be mapped to a connected simplex in $\mathcal{O}$.

The main technique we use to demonstrate the impossibility is reduction.
That is, we reduce a well-known problem, the $k$-set agreement problem, whose solution does not exist for certain conditions that are satisfied by the equilibrium of an arbitrary-pool blockchain.
Specifically, we will show that we can ``simulate'' the execution of the $k$-set agreement procedure by calling the operations in an arbitrary-pool blockchain.
Indeed, if we are able to do so, that means we would be able to solve the $k$-set agreement problem,
which is not possible.
As a result, a protocol for reaching equilibrium in an arbitrary-pool network cannot exist.

\subsection{Binary Partition: Topology of $\mathcal{I}_1$ and $\mathcal{I}_2$}
\label{sec:arbitrary_binary}

We start our analysis with a simpler problem where only one simplex $\sigma_1$ has a different dimension than other $q-1$ simplices in the input complex.
Without loss of generality, we assume $\mathcal{I} = \mathcal{I}_1 \cup \mathcal{I}_2$,
and $\mathcal{I}_1 = \{ \sigma_1 \}$ and $\mathcal{I}_2 = \{ \sigma_k: 2 \leq k \leq q\}$.
Moreover, we assume $\mathtt{dim}(\mathcal{I}_1) = \mathtt{dim}(\sigma_1) = n$,
$\mathtt{dim}(\mathcal{I}_2) = \mathtt{dim}(\sigma_k) = m$,
and $n > m > 0$.

We then construct limited simplicial maps $\pi_1$ and $\pi_2$ for $\mathcal{I}_1$ and $\mathcal{I}_2$, respectively.
As before, we have the following equations:
\[\displaystyle
\begin{cases}
    \pi_1 = \varphi |_{\sigma_1},\\
    \pi_2 = \varphi |_{\sigma_2} = \ldots = \varphi |_{\sigma_q}.
\end{cases}
\]
As a result, the mapped simplices can be similarly partitioned into two parts $\mathcal{O} = \mathcal{O}_1 \cup \mathcal{O}_2$,
where
\[\displaystyle
\begin{cases}
    \mathcal{O}_1 = \{\pi_1(\sigma_1)\},\\
    \mathcal{O}_2 = \{\pi_2^{q-1}(\sigma_2)\}.
\end{cases}
\]
We can write $\mathcal{O}_2$ as a power of $\pi_2$ because all the $\sigma_k$, $2 \leq k \leq q$ are homeomorphic, as discussed before (cf. Lemma~\ref{thm:homeo}).
It turns out that we can expect two distinct sets of outcomes in the output complex.
The problem can now be represented as a triple $(\mathcal{I}_1 \cup \mathcal{I}_2, \mathcal{O}_1 \cup \mathcal{O}_2, \pi_1 \circ \pi_2^{q-1})$.
We define an auxiliary operation to return the partition of a specific simplex:
\[
\mathtt{part}(\sigma) = i
\]
such that $\mathtt{dim}(I_i) = \mathtt{dim}(\sigma)$.

\subsection{Ternary and Higher-order Partitions}
\label{sec:arbitrary_ternay}

We can naturally extend the partition techniques from two to three or more.
As we will show later in~\S\ref{sec:arbitrary_kset},
for $k \geq 3$, all the $k$-set agreement problems are not decidable to have a protocol.
In the following, we assume $3 \leq k \leq q$:
there are at least three distinct pools in terms of their sizes.
Note that we also require that the size of the largest pool is at least $k$, i.e., $n+1 \geq k$, 
because otherwise there cannot exist $k$ distinct sizes

As in the case of binary partition, we can similarly define $\mathcal{I} = \mathcal{I}_1 \cup \mathcal{I}_2 \ldots \cup \mathcal{I}_k$.
We denote the cardinality, i.e., pool size, of partitioned input complex as:
\[
|\mathcal{I}_1| = i_1,\; 
|\mathcal{I}_2| = i_2,\; 
\ldots,\;
|\mathcal{I}_k| = i_k,\; 
\]
where $1 \leq x < y \leq k \implies i_x \neq i_y$,
and require 
\[\displaystyle
\sum_{\kappa = 1}^k i_\kappa = q,\; 1 \leq \kappa \leq k.
\]
We use $K$ to denote the set of pool sizes:
\[
K = \{i_1,\; \ldots,\; i_k\}.
\]

It should be noted that $k$ partitions of the input complex do not change $\mathcal{I}$'s topological properties.
There are still $q$ components in $\mathcal{I}$,
each of which is a $i_\kappa$-dimensional simplex where $i_\kappa \in K$.
Furthermore, these $q$ simplices are disconnected since the miners are in distinct pools and learn nothing from other pools initially.

Having partitioned input complex based on their cardinalities,
we are ready to construct the output complex the corresponding carrier maps.
The partitioned simplicial maps for $k$-distinct-pool are a natural extension to the binary partition we discuss in~\S\ref{sec:arbitrary_binary}.
Recall that all simplices are homeogeneous for a specific dimension $i_\kappa \in K$.
Therefore, the compounded simplicial map can be written as
\[
O_\kappa = \pi^{i_\kappa}_\kappa (I_\kappa),\; 1 \leq \kappa \leq k.
\]
The task then can be expressed in the following triple:
\[\displaystyle
\left(
\bigcup_{1 \leq \kappa \leq k} \mathcal{I}_\kappa,
\bigcup_{1 \leq \kappa \leq k} \mathcal{O}_\kappa,
\prod_{1 \leq \kappa \leq k} \pi^{i_\kappa}_\kappa
\right),
\]
where we use multiplication to denote map composition:
\[\displaystyle
\prod_{1 \leq \kappa \leq k} \pi^{i_\kappa}_\kappa = 
\underbrace{
\overbrace{\pi_1 \circ \ldots \circ \pi_1}^{i_1} 
\circ \ldots
\overbrace{\pi_\kappa \circ \ldots \circ \pi_\kappa}^{i_\kappa} 
\circ \ldots
\overbrace{\pi_k \circ \ldots \circ \pi_k}^{i_k}
}_{q}.
\]
\subsection{Relations with $k$-set Agreement}
\label{sec:arbitrary_kset}

\subsubsection{The $k$-set Agreement Problem}

This section briefly reviews the classical $k$-set agreement (KSA) problem and its important decidability implications in the literature.
In the following sections, we will then use the KSA as a base model and reduce it to the $k$-distinct-pool (KDP) problem in blockchains.
In the literature of computation theory,
the reduction is also called ``simulation,''
in the sense that a solution protocol for the targeting problem can be wrapped up as a solution for the base problem whose decidability is already known.

\begin{definition} [$k$-set agreement task.]
\label{def:ksa}
A task of $k$-set agreement is a triple $(\mathcal{I}, \mathcal{O}, \Delta)$,
where the input $\mathcal{I}$ is comprised of multiple $n$-simplices,
the output $\mathcal{O}$ is a complex of $k$ pairwise-disconnected simplices each of which is connected by itself,
and the carrier map $\Delta$ mapping each $n$-simplex in $\mathcal{I}$ to a subcomplex in $\mathcal{O}$. 
The initial views of vertices in $\mathcal{I}$ are taken from a set $V^{in}$,
the final views of vertices in $\mathcal{O}$ are taken from a set $V^{out}$,
whose cardinality is $k$, i.e., $|V^{out}| = k$.
While the values in $\{\mathtt{view}(\sigma): \sigma \in \mathcal{I}\}$ can be arbitrary,
the output complex is required to have exactly $k$ simplices, each of which takes a distinct value from $V^{out}$.
\end{definition}

Intuitively, KSA requires that the outcome can have up to $k$ distinct values for a group of $(n+1)$ members.
It is a generalization of the well-known \textit{consensus} problem where only one value can be chosen by the members.
Indeed, the consensus problem is a degenerate of KSA where $k = 1$.
Although KSA may seem easier than the consensus problem,
KSA is far from trivial under the regular computation models.
A computation model is comprised of multiple facets, such as \textit{failure models} (e.g., crash failures, Byzantine failures),
\textit{communication models} (e.g., shared memory, message passing),
and \textit{timing models} (e.g., synchronous processes, asynchronous processes).
The literature~\cite{mherlihy_book13} showed that a model is capable of solving a $k$-set agreement task only when $k \leq 2$.
To make this paper self-contained, we restate it here as the following lemma.

\begin{lemma} [KSA is solvable if $k \leq 2$]
\label{thm:ksa}
A model is capable of solving a $k$-set agreement task only if $k \leq 2$.
\end{lemma}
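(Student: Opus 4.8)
The statement is phrased as a necessity---solvability implies $k \leq 2$---so it suffices to establish the contrapositive: for $k \geq 3$ no protocol in the model can solve the task of Definition~\ref{def:ksa}. The plan is to recast this classical impossibility, imported from~\cite{mherlihy_book13}, inside the combinatorial-topological language already developed here, rather than to reprove the communication-model machinery from scratch. First I would make precise the \emph{protocol complex} $\mathcal{P}$, whose simplices are the global final states reachable by a candidate protocol when it is started from the input complex $\mathcal{I}$ of Definition~\ref{def:ksa}. The one structural fact I would borrow wholesale from~\cite{mherlihy_book13}---the asynchronous computability theorem---is that, for the model under consideration, $\mathcal{P}$ is color-preservingly homeomorphic to an iterated \emph{chromatic subdivision} $\mathtt{Ch}^r(\mathcal{I})$ of the input, and that solvability of the task is equivalent to the existence of a color-preserving simplicial map
\[
\delta: \mathtt{Ch}^r(\mathcal{I}) \rightarrow \mathcal{O}
\]
that is carried by $\Delta$, i.e.\ that sends the subdivision of each input simplex into $\Delta$ of that simplex.

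Given this reduction, the impossibility for $k \geq 3$ follows from a coloring obstruction that is, in essence, the same ``topological invariant on connectivity'' relied upon throughout this paper. I would fix a single input $n$-simplex $\sigma \in \mathcal{I}$ carrying $n+1$ distinct input views, with $n \geq k \geq 3$. A candidate decision map $\delta$ assigns to every vertex of the subdivided simplex $\mathtt{Ch}^r(\sigma)$ an output value which, by the validity built into the carrier map $\Delta$, must be drawn from the inputs sitting on the minimal face of $\sigma$ that contains that vertex. This is exactly a Sperner coloring of a subdivided $n$-simplex. Sperner's Lemma then forces the existence of a ``rainbow'' simplex in $\mathtt{Ch}^r(\sigma)$ whose $n+1$ vertices carry $n+1$ pairwise-distinct output values; this rainbow simplex names an admissible execution in which $n+1$ distinct decisions are produced, contradicting the requirement $|V^{out}| = k$ of Definition~\ref{def:ksa} whenever $k \leq n$. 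Since we took $k \geq 3$ with $n \geq k$, no color-preserving $\delta$ can exist, and by the computability theorem no protocol solves the task, establishing the claim.

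I expect the genuine obstacle to be the first step rather than the Sperner argument. Showing that the reachable-state complex $\mathcal{P}$ really is a chromatic subdivision of $\mathcal{I}$---equivalently, that the protocol can only refine the input complex while preserving its connectivity, whereas the $k$-set output complex $\mathcal{O}$ of Definition~\ref{def:ksa} is a disjoint union of $k$ pieces and therefore disconnected in a way the subdivision cannot match---is precisely the heavy machinery of~\cite{mherlihy_book13}, and I would cite it rather than redevelop it. The combinatorial core, once that reduction is in hand, is then elementary. A secondary and more delicate point is calibrating the exact threshold: the bare wait-free Sperner obstruction already bites for every $k \leq n$, so pinning the cutoff at precisely $k = 2$ is really a statement about how many process failures the model tolerates, which would have to be fixed at the outset and tracked through the construction of $\mathcal{P}$. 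For the reduction carried out in~\S\ref{sec:arbitrary_kset}, however, only the one-sided conclusion that $k \geq 3$ is unsolvable is needed, and that is exactly what the rainbow-simplex contradiction supplies.
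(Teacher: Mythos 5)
The paper's own ``proof'' is a one-line citation to Theorem 5.6.14 of \cite{mherlihy_book13}, so you have attempted strictly more than the paper does; unfortunately, the argument you supply proves a different statement, and the difference is fatal. Your Sperner-coloring obstruction is the classical \emph{wait-free} impossibility: it shows that no protocol solves $k$-set agreement whenever $k \leq n$ (with $n+1$ the number of processes). Nothing in your argument uses the hypothesis $k \geq 3$ except to conclude $k \leq n$; run it verbatim with $k = 2$ and $n \geq 2$ and it equally ``proves'' that 2-set agreement, and even consensus, are unsolvable. That conclusion contradicts both the title of Lemma~\ref{thm:ksa} (``solvable if $k \leq 2$'') and the way the paper uses the lemma downstream: the argument of \S\ref{sec:arbitrary_ksa_2sa} and Theorem~\ref{thm:arbitrary_2pool} require a protocol for 2-set agreement to exist. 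So the threshold at $k = 2$---which is the entire content of the lemma---is exactly what your proof fails to deliver; you acknowledge the calibration issue in your closing paragraph but deferring it does not close it.

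The deeper point is that no calibration of the model can rescue this route. In $k$-set agreement, larger $k$ is a \emph{weaker} requirement, so Sperner- and connectivity-type obstructions always rule out small $k$ relative to the fault tolerance of the model: wait-free they rule out $k \leq n$, $t$-resiliently they rule out $k \leq t$. They can never produce a cutoff of the form ``unsolvable precisely when $k \geq 3$'' at a constant independent of $n$; indeed, for $k \geq n+1$ the task is trivially solvable (every process decides its own input), so ``solvable only if $k \leq 2$'' is outright false in the wait-free setting you construct whenever there are at least three processes, and your quietly added hypothesis $n \geq k$ is doing load-bearing work that the lemma's statement does not license. Whatever Theorem 5.6.14 of \cite{mherlihy_book13} actually asserts, it is a model-specific statement, and the only faithful options here are the paper's (cite it as a black box) or an argument tailored to that specific model; the ACT-plus-Sperner machinery you invoke has its threshold tied to $n$ (or $t$), not to the constant $2$, and therefore cannot establish Lemma~\ref{thm:ksa} either as stated or as the paper uses it.
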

\begin{proof}
See Theorem 5.6.14 in~\cite{mherlihy_book13}. 
\end{proof}

In the remainder of this section,
we call a KSA problem with $k=2$ and $k=3$ 2SA and 3SA, respectively.
Similarly, we call a KDP problem with $k=2$ and $k=3$ 2DP and 3DP, respectively.
In~\S\ref{sec:arbitrary_ksa_2sa}, we will show that a 2SA protocol can reduce to a 2DP protocol,
which means a blockchain with two distinct pool sizes can reach an equilibrium.
In~\S\ref{sec:arbitrary_ksa_3sa}, we will show that a 3DP protocol can reduce to a 3SA protocol,
meaning that a blockchain with three or more distinct pool sizes does not have an equilibrium.

\subsubsection{2SA reduces to 2DP}
\label{sec:arbitrary_ksa_2sa}

We first provide the basic concepts and notations of computing theory in the context of reduction,
much of which was elaborated in~\cite{mherlihy_book13}.

A \textit{task} is a problem to be solved. 
Examples of tasks include finding the equilibrium of equal pools discussed in~\S\ref{sec:equal_problem}.
For completeness, we formally define the task here.

\begin{definition} [Task]
A \textit{task} is a triple $(\mathcal{I}, \mathcal{O}, \Delta$),
where $\mathcal{I}$ is the input complex comprising all the possible input simplices,
$\mathcal{O}$ is the output complex comprising all the valid results represented by subcomplexes,
and $\Delta$ is a carrier map between the input and output complexes and satisfies monotonic property (i.e., closed on inclusion relationship):
\[
\Delta: \mathcal{I} \rightarrow 2^\mathcal{O}.
\]
\end{definition}

A \textit{protocol} is an algorithm that solves the task, i.e., a procedure to reach a subset of the valid subcomplex in $\mathcal{O}$.
A protocol also takes the same input complex $\mathcal{I}$, and follow the application-specific procedure to end up in a \textit{protocol complex}.
The course of ``following the application-specific procedure'' is abstracted also by a carrier map, namely $\Xi$.
Formally, a protocol is defined as follows.

\begin{definition} [Protocol]
A protocol is a triple $(\mathcal{I}, \mathcal{P}, \Xi)$,
where $\mathcal{I}$ is the input complex,
$\mathcal{P}$ is a complex representing all the possible final results according to the algorithm,
and $\Xi$ is a carrier map:
\[
\Xi: \mathcal{I} \rightarrow 2^\mathcal{P}.
\]
\end{definition}

We say that a protocol solves a task if the results in $\mathcal{P}$ are encompassed by the expected outcomes $\mathcal{O}$,
as stated in the following.

\begin{definition}
A protocol $(\mathcal{I}, \mathcal{P}, \Xi)$ solves a task $(\mathcal{I}, \mathcal{O}, \Delta$ if there is a simplicial map $\delta$ 
\[
\delta: \mathcal{P} \rightarrow \mathcal{O}
\]
such that for any simplex $\sigma \in \mathcal{I}$ the following is satisfied\footnote{This property is called ``carried by'' in the literature. We here avoid using this terminology for less confusion between \textit{carried by} and \textit{carrier map}.}:
\[
\delta (\Xi (\sigma)) \in \Delta (\sigma).
\]
By convention, we also write $\delta \circ \Xi \subseteq \Delta$,
and $\delta$ is usually called a \textit{decision map}.
\end{definition}
The relationship between protocols and tasks can be illustrated in the following diagram in the sense of category theory.

\begin{equation*}
\begin{tikzcd}
\mathcal{I}    \arrow [r, "\Xi"]
            \arrow [dr, swap, "\delta \circ \Xi\; \subseteq\; \Delta"]
&
\mathcal{P}   
    \arrow [d, "\delta"]
\\
{}&
\mathcal{O}   
\end{tikzcd}
\end{equation*}

Having formalized the tasks and protocols, 
we are ready to define a computation model,
as follows.
Intuitively, a model is just a collection of protocols that are applicable to a specific input complex.

\begin{definition} [Computation Model]
A computation model on an complex $\mathcal{I}$ is a collection of triples $(\mathcal{I}, \mathcal{P}_i, \Xi_i)$, $i \geq 0$.
\end{definition}

Now we are ready to portrait the relationship between computation models.

\begin{definition} [Model Reduction]
Suppose there are two computation models, $\mathbb{R}$ and $\mathbb{V}$.
By convention, they are called \textit{real model} and \textit{virtual model}, respectively.
Real model $\mathbb{R}$ is said to \textit{reduce} to virtual model $\mathbb{V}$ if every protocol for $\mathbb{V}$ implies a protocol in $\mathbb{R}$.
\end{definition}

Usually, the reduction is implemented by a \textit{simulation} of one protocol in another,
as defined below.

\begin{definition} [Model Simulation]
\label{def:model_sim}
Let $(\mathcal{I}, \mathcal{P}_r, \Xi_r)$ be a protocol for model $\mathbb{R}$ and $(\mathcal{I}, \mathcal{P}_v, \Xi_v)$ be a protocol for model $\mathbb{V}$, respectively.
We call $\mathbb{V}$ simulates $\mathbb{R}$ if there exists a simplicial map $\Phi$
\[
\Phi: \mathcal{P}_r \rightarrow \mathcal{P}_v.
\]
The map $\Phi$ is called a \textit{simulation} between $\mathbb{R}$ and $\mathbb{V}$.
For any $\sigma \in \mathcal{I}$, we have 
\[
(\Phi \circ \Xi_r)(\sigma) \subseteq \Xi_v(\sigma).
\]
\end{definition}
From the perspective of category theory, the morphism can be illustrated in the following diagram.

\begin{equation*}
\begin{tikzcd}
\mathcal{I}    \arrow [r, "\Xi_r"]
            \arrow [dr, swap, "\Phi \circ \Xi_r\; \subseteq\; \Xi_v"]
&
\mathcal{P}_r   
    \arrow [d, "\Phi"]
\\
{}&
\mathcal{P}_v   
\end{tikzcd}
\end{equation*}

We construct the simulation from 2SA to 2DP as follows.
Suppose the 2SA is represented by task $(\mathcal{I}, \mathcal{O}, \Delta)$ and has a protocol $(\mathcal{I}, \mathcal{P}, \Xi)$.
We know such a protocol must exist because when $k = 2$,
the $k$-set-agreement problem turns to be a graph-theoretical problem and there are many algorithms (e.g., breadth-first, depth-first) to decide whether a pair of vertices ($u$, $v$) has a path between $\Delta(u)$ and $\Delta(v)$.

For the 2SA task, assume there are $N$ nodes in total,
each of which is a vertex in $\mathcal{I}$.
For its protocol $(\mathcal{I}, \mathcal{P}_r, \Xi_r)$,
every time the protocol complex $\mathcal{P}_r$ grows into the next step,
the new complex is a subdivision of the previous complex.
There must exist a sufficiently large integer $M$ such that
\[
\mathtt{div}^M \mathcal{I} = \mathcal{P}_r,
\]
where $\Xi = \mathtt{div}^M$.
Therefore, the protocol for 2SA can be rewritten as a triple
$(\mathcal{I}, \mathtt{div}^M \mathcal{I}, \mathtt{div}^M)$.

For each $\mathtt{div}$ operation, we map it to the swap of pool assignment for a set of miners.
We denote $\mathtt{swap}^M$ the following operation:
\[
\mathtt{pool}(\sigma) = (\mathtt{pool}(\sigma) + 1) \;\mathtt{mod}\; q, 
\]
such that 
\[
\mathtt{index}(\sigma) = M \;\mathtt{mod}\; |I_{\mathtt{part}(\sigma)}|.
\]
Intuitively, we restrict the miners to switch to a new pool in a systematic manner.
It should be clear that this manner is only a subset of all the possible movements;
the point is that these movements are all allowed by $\mathcal{P}_v$, and any final assignment can be achieved by this operation as long as $M$ is sufficiently large.
As a result, the protocol for 2DP $(\mathtt{I}, \mathcal{P}_v, \Xi_v)$ can be stated as 
$(\mathcal{I}, \mathtt{swap}^M \mathcal{I}, \mathtt{swap}^M)$.

It remains to show that for any simplex $\sigma \in \mathtt{div}^M \mathcal{I}$, there exists a $\Phi$ such that $\Phi(\sigma) \in \mathtt{swap}^M \mathcal{I}$.
Note that for any $\sigma \in \mathtt{div}^M \mathcal{I}$,
the vertices in $\sigma$ are still $\{v_0, v_1, \ldots, v_n\}$,
where $n+1$ is the total number of miners.
Of course, the simplex $\sigma$ also contains all the higher-dimensional faces on the vertices.
Therefore, we need to show that for sufficiently large $M$,
the resulting simplex consists of all $n+1$ vertices.
As we will show in the following lemma, 
this is indeed the case.

\begin{lemma}
\label{thm:arbitrary_fullsimplex}
For sufficiently large $M \in \mathbb{Z}^+$, There is at least one simplex $\tau \in \mathtt{swap}^M \mathcal{I}$ such that $\mathtt{dim}(\tau) = \mathtt{dim}(\mathtt{I})$. 
\end{lemma}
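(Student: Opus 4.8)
The plan is to split the statement into an \emph{existence} claim (some iterate of $\mathtt{swap}$ produces a facet spanning all $n+1$ miners) and a \emph{persistence} claim (once such a facet appears, it reappears at every later step), where the second is precisely what upgrades ``some large $M$'' to ``all sufficiently large $M$.'' First I would unpack $\mathtt{swap}^M\mathcal{I}$: since a single $\mathtt{swap}$ only relabels pool assignments without creating or destroying miners, it acts as a bijection on the fixed vertex set $\{v_0,\ldots,v_n\}$ and therefore carries every $d$-simplex to a $d$-simplex, i.e.\ $\mathtt{dim}(\mathtt{swap}(\tau)) = \mathtt{dim}(\tau)$. This already delivers persistence: if some iterate $\mathtt{swap}^{M_0}\mathcal{I}$ contains an $n$-simplex $\tau$, then $\mathtt{swap}(\tau)\in\mathtt{swap}^{M_0+1}\mathcal{I}$ is again an $n$-simplex, and by induction $\mathtt{swap}^{M}\mathcal{I}$ contains an $n$-simplex for every $M\ge M_0$ --- exactly the ``for sufficiently large $M$'' quantifier in the statement. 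Thus the whole lemma reduces to exhibiting a single threshold $M_0$ at which a facet spanning all $n+1$ miners first appears.

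To prove existence at a single threshold I would decompose along the partition $\mathcal{I}=\bigcup_\kappa \mathcal{I}_\kappa$ of Section~\ref{sec:arbitrary_ternay}. Each $\mathcal{I}_\kappa$ consists of equal pools, so Theorem~\ref{thm:main_equal} applies within the class: the restricted swap $\pi_\kappa$, iterated $i_\kappa$ times, yields a connected full simplex on the miners of class $\kappa$, whose dimension is one less than the number of those miners. Combining this with the per-class version of the persistence argument above, there is a class threshold $M_\kappa$ such that $\mathtt{swap}^{M}\mathcal{I}_\kappa$ carries its class-full simplex for all $M\ge M_\kappa$. Taking $M_0=\max_\kappa M_\kappa$ then places every class simultaneously into its full-simplex state.

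The final step is to glue the class-wise full simplices into one global facet. Because distinct pool sizes index disjoint sets of miners and the swap rule drives the $\mathtt{view}$ values of each class to cover its own pools, the union of the class-full simplices is an assignment of all $n+1$ miners whose views jointly cover $[q]$ and in which no two miners share a $(\mathtt{pool},\mathtt{index})$ pair. By the join construction used in Lemma~\ref{thm_simplicial_map}, this union is itself a simplex of $\mathtt{swap}^{M}\mathcal{I}$, necessarily of dimension $n=\mathtt{dim}(\mathcal{I})$. Hence for every $M\ge M_0$ the complex $\mathtt{swap}^M\mathcal{I}$ contains the desired $\tau$.

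The hard part will be reconciling the gluing step with the precise per-step semantics of the swap rule: a single $\mathtt{swap}$ advances only the miners whose index satisfies $\mathtt{index}(\sigma)\equiv M \pmod{|I_{\mathtt{part}(\sigma)}|}$, so the several classes march on different schedules and are not automatically in full-simplex state at the same instant. The crux is therefore to verify that per-class persistence genuinely holds for \emph{all} large $M$ --- not merely for the multiples of $i_\kappa$ --- so that the classes' full-simplex windows overlap on a common tail $[M_0,\infty)$ rather than only at sporadic common multiples of the $i_\kappa$. Making the dimension-preserving bijection argument airtight despite this slicing, and checking that the glued assignment is always collision-free, are the steps I would scrutinize most carefully.
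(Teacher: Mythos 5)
Your persistence half is sound, and in fact it is tidier than the paper on the quantifier: the paper's proof handles a single ``sufficiently large'' $M$ and is silent on why the full simplex survives all later steps, whereas your observation that $\mathtt{swap}$ preserves the miner set of a simplex, hence its dimension, settles that cleanly. The genuine gap is in the existence-plus-gluing half, and it is not the scheduling worry you flagged at the end. Your construction runs the dynamics class by class: each $\mathcal{I}_\kappa$ evolves under the restricted map $\pi_\kappa$, with ``the $\mathtt{view}$ values of each class covering its own pools,'' and you then assert that the union of the class-full simplices is a simplex of $\mathtt{swap}^M\mathcal{I}$ by appeal to the join construction of Lemma~\ref{thm_simplicial_map}. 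But that lemma's join argument operates inside a single component and supplies no cross-class faces, and---more fundamentally---the swap rule is defined globally as $\mathtt{pool}(\sigma) = (\mathtt{pool}(\sigma)+1) \bmod q$ with $q$ the total number of pools, so a miner is deliberately cycled through pools of the \emph{other} partition rather than kept within its own size class. If the miners really evolved under class-restricted swaps, as your decomposition assumes, the partitions would remain disconnected at every step, the glued union would never be a simplex of $\mathtt{swap}^M\mathcal{I}$, and no threshold $M_0$ would exist at all.

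The cross-partition exchange you decomposed away is precisely the engine of the paper's own proof, which is a short contradiction argument rather than a construction: suppose no simplex of dimension $\mathtt{dim}(\mathcal{I})$ exists; then some vertex $v$ in one partition is never exchanged with any pool of the other partition during the $M$ swaps; but the round-robin schedule (at step $M$ the miners with $\mathtt{index}(\sigma) = M \bmod |I_{\mathtt{part}(\sigma)}|$ advance) touches every index class, hence every miner, once $M$ exceeds the pool size---contradiction. So the repair is not to patch the per-class full-simplex windows so they overlap, but to drop the per-class invocation of Theorem~\ref{thm:main_equal} entirely and argue directly from the global round-robin rule that, for $M$ larger than the largest pool size, every miner has crossed the partition boundary at least once, which is what assembles a simplex on all the miners. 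Your persistence lemma can then legitimately be appended to strengthen the paper's conclusion from ``some large $M$'' to ``all $M$ beyond the threshold.''
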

\begin{proof}
Suppose for contradiction that there is no simplex of dimension $\mathtt{dim}(\mathcal{I})$.
Recall that in 2DP, there are only two partitions of pools with distinct number of miners.
It follows that there is at least one ``missing'' vertex $v$ in a specific pool that is not exchanged by any pool of the other partition after $M$ times of pool swaps.
However, the swap operation is defined in such a round-robin fashion that,
as long as $M$ is larger than the pool size,
all the vertices would be swapped.
Since we assume $M$ is ``sufficiently large,''
it is larger than the pool size also,
thus leading to a contradiction.
\end{proof}

We are now ready to make the first main conclusion for arbitrary-pool-size blockchains, as stated in the following theorem.

\begin{theorem}
\label{thm:arbitrary_2pool}
If a blockchain has pools of only two distinct sizes,
then it has an equilibrium.
\end{theorem}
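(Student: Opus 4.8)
The plan is to complete the reduction set up in this subsection. Since the $2$-set agreement task is solvable (Lemma~\ref{thm:ksa} with $k=2$), it suffices to transport its protocol, via the simulation $\Phi$, into a protocol for 2DP; the existence of such a 2DP protocol is precisely the statement that a blockchain with two distinct pool sizes admits an equilibrium. Concretely, I would organize the argument as an application of the simulation machinery of Definition~\ref{def:model_sim}, with 2SA playing the role of the model whose protocol is known to exist and 2DP the model whose protocol we wish to construct.

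First I would invoke Lemma~\ref{thm:ksa} to fix a concrete 2SA protocol $(\mathcal{I}, \mathtt{div}^M \mathcal{I}, \mathtt{div}^M)$ for a sufficiently large $M$, exactly as rewritten above. Next I would exhibit the simulation $\Phi : \mathtt{div}^M \mathcal{I} \rightarrow \mathtt{swap}^M \mathcal{I}$ by sending each step of the subdivision to one round of the round-robin $\mathtt{swap}$ operation, and then verify that $\Phi$ is a simplicial map meeting the containment $(\Phi \circ \mathtt{div}^M)(\sigma) \subseteq \mathtt{swap}^M(\sigma)$ demanded by Definition~\ref{def:model_sim}. The essential observation is that both $\mathtt{div}$ and $\mathtt{swap}$ act one increment at a time and preserve vertex names, so the induced map on each simplex is well-defined and monotone with respect to inclusion; the partition bookkeeping carried by $\mathtt{part}(\sigma)$ and $|I_{\mathtt{part}(\sigma)}|$ ensures the swap respects the two distinct pool sizes.

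Then I would apply Lemma~\ref{thm:arbitrary_fullsimplex}: for $M$ exceeding the larger of the two pool sizes, $\mathtt{swap}^M \mathcal{I}$ contains a simplex $\tau$ of full dimension $\mathtt{dim}(\mathcal{I})$. Because the round-robin swap forces every vertex to change its pool assignment, this $\tau$ also satisfies the no-stay requirement of Lemma~\ref{thm:no_stay}, so it is a connected output simplex in which no miner wishes to relocate. Reading off the final assignment gives a decision map $\delta : \mathtt{swap}^M \mathcal{I} \rightarrow \mathcal{O}$ with $\delta(\Xi_v(\sigma)) \in \Delta(\sigma)$, which certifies that the 2DP task is solved and hence that the claimed equilibrium exists.

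The hardest part will be the middle step: showing that $\Phi$ genuinely satisfies the simulation containment rather than merely matching dimensions. One must check that the systematic movement encoded by $\mathtt{swap}$ always lands inside the subcomplex that $\mathtt{swap}^M$ actually permits, and that restricting to this deterministic subset of movements does not destroy the connectivity already guaranteed for each equal-size block by Theorem~\ref{thm:main_equal}. Once this containment is established together with the full-dimensionality of Lemma~\ref{thm:arbitrary_fullsimplex}, the conclusion follows immediately.
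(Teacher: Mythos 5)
Your proposal is correct and takes essentially the same route as the paper: the paper's own (very terse) proof is just the contrapositive of your argument, resting on the same three ingredients---the $\mathtt{div}^M \to \mathtt{swap}^M$ simulation, Lemma~\ref{thm:arbitrary_fullsimplex}, and the solvability of 2-set agreement from Lemma~\ref{thm:ksa}. Your direct version merely spells out the simulation containment and the decision map $\delta$ that the paper leaves implicit.
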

\begin{proof}
If the blockchain does not has an equilibrium on the pools,
then according to Lemma~\ref{thm:arbitrary_fullsimplex} there cannot be a protocol for the 2-set-agreement problem.
However, it is known that there exist algorithms solving the 2-set-agreement problem.
\end{proof}

\subsubsection{KDP reduces to KSA, $k \geq 3$}
\label{sec:arbitrary_ksa_3sa}

In contrast to the previous section,
we will show that when the distinct pools are larger than or equal to three,
there does not exist an equilibrium.
In the remainder of this section, we will assume $k \geq 3$,
and both KDP and KSA assume the number of underlying distinct sets/pools is at least three.
Although the reduction approach is implemented through simulation,
we show that this simulation can be constructed in the inverse direction:
any protocol solving a KSA implies a protocol in KDP.
If this is the case, and since we know that KSA's protocol is undecidable (again, assuming $k \geq 3$),
KDP cannot have an equilibrium when $k \geq 3$.
As before, we start with constructing the protocols for KSA and KDP, respectively.

Suppose a protocol $(\mathcal{I}, \mathcal{P}_v, \Xi_v)$ solves the KDP problem.
Based on the discussion in~\S\ref{sec:arbitrary_ternay},
the protocol complex is a union of $k$ subcomplexes:
\[
\mathcal{P}_v = \bigcup_{1 \leq \kappa \leq k} \mathcal{P}^\kappa_v,
\]
where
\[
\mathtt{dim}(\mathcal{P}_v^\kappa) = |I_\kappa|,\; 1\leq \kappa \leq k,
\]
and $I_\kappa$ is the $\kappa$-th partition---the set of $|I_\kappa|$-simplices in the input complex $\mathcal{I}$.
While we do not know the operational progress of $\mathcal{P}_v$ because it depends on the specific algorithm, 
it is evident that the final complex must have such a $n$-simplex that is disconnected from any other simplices because otherwise some vertices would have ambiguous views of values---not an equilibrium.
Since the protocol complex is split into $k$ partitions,
there must be $k$ distinct $n$-simplices in $\mathcal{P}_v$.

Now, we switch our focus to the protocol $(\mathcal{I}, \mathcal{P}_r, \Xi_r)$ that solves the KSA problem.
It turns out that although the output complex is different for $k = 2$ and $k = 3$, 
the protocol complex remains unchanged:
the input complex is still subdivided\footnote{In the literature, \textit{Barycentric} subdivision, denoted by $\mathtt{Bary(\cdot)}$, and \textit{Chronmatic} subdivision, denoted by $\mathtt{Ch(\cdot)}$, are the most widely-used subdivisions. Here, we ignored the details and simply say $\mathtt{div}(\cdot)$.} repeatedly.
That is, the protocol can still be written $(\mathcal{I}, \mathtt{div}^M \mathcal{I}, \mathtt{div}^M)$ for sufficiently large $M \in \mathbb{Z}^+$.

We are ready to construct the simplicial map $\Phi$ (cf. Def.~\ref{def:model_sim})
\[
\Phi: \mathcal{P}_v \rightarrow \mathtt{div}^M \mathcal{I}.
\]
We need to first map the $k$ distinct $n$-simplices to simplices in $\mathtt{div}^m \mathcal{I}$ for some $m \geq 1$ that are disjoint.
It turns out that $m = 1$ can suffice:
$\mathtt{div} \mathcal{I}$ is capable to hold $k$ such $n$-simplices.
To demonstrate this, we need the following lemma.

\begin{lemma}
\label{thm:arbitrary_subdivision}
An $n$-simplex can be subdivided into $(n+1)$ disjoint $n$-dimensional simplices.
\end{lemma}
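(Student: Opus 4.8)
The plan is to prove the lemma by exhibiting an explicit subdivision, namely the stellar subdivision of the given simplex at its barycenter, and then checking the two properties that actually carry the statement: that this subdivision has exactly $(n+1)$ top-dimensional simplices, and that those pieces are ``disjoint'' in the only sense available for the cells of a subdivision, namely that their interiors are pairwise disjoint (equivalently, any two of them meet in a common proper face). Let $\sigma = \{v_0, \ldots, v_n\}$ be the $n$-simplex, let $b$ denote its barycenter, and for each $i \in \{0, \ldots, n\}$ let $F_i = \sigma \setminus \{v_i\}$ be the facet opposite $v_i$. I would define the candidate pieces as the cones $\tau_i = \{b\} * F_i$, so that $\tau_i$ has vertex set $\{b\} \cup \{v_j : j \neq i\}$.

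First I would record that each $\tau_i$ is genuinely $n$-dimensional. It has exactly $n+1$ vertices, and because $b$ lies in the interior of $\sigma$ it is not contained in the affine hull of the facet $F_i$; hence the $n+1$ vertices are affinely independent and $\mathtt{dim}(\tau_i) = n$. This already settles the count: there is one such simplex per facet, and $\sigma$ has $n+1$ facets, so the subdivision contains exactly $(n+1)$ top cells.

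The substantive step is to verify that $\{\tau_0, \ldots, \tau_n\}$, together with all of their faces, is a bona fide subdivision of $\sigma$ whose top cells have pairwise disjoint interiors. Here I would pass to barycentric coordinates: writing a point of $\sigma$ as $x = (x_0, \ldots, x_n)$ with $x_j \geq 0$ and $\sum_j x_j = 1$, a short computation shows that $\tau_i$ is precisely the region $\{x : x_i \leq x_j \text{ for all } j\}$ on which coordinate $i$ is minimal. From this description three facts follow at once: the union $\bigcup_i \tau_i$ is all of $\sigma$, since every point has some minimal coordinate; the interiors, cut out by the strict inequalities $x_i < x_j$ for $j \neq i$, are pairwise disjoint; and the pairwise intersection satisfies $\tau_i \cap \tau_j = \{b\} * (F_i \cap F_j)$, the cone over the common face $F_i \cap F_j = \sigma \setminus \{v_i, v_j\}$, which is a proper face of both $\tau_i$ and $\tau_j$. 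This last point is exactly the compatibility condition that makes the family a simplicial complex, so the construction is a legitimate subdivision $\mathtt{div}(\sigma)$.

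I expect the main obstacle to be conceptual rather than computational. The statement cannot be read as asserting $(n+1)$ \emph{vertex-disjoint} simplices, since that would demand $(n+1)^2$ vertices and is impossible for any honest subdivision once $n \geq 1$: the pieces must share boundary faces in order to tile $\sigma$. The care in the proof therefore goes into (i) fixing the intended meaning of ``disjoint'' as disjoint interiors, and (ii) checking the proper-face compatibility so that the cones assemble into an actual subdivision rather than an arbitrary overlapping collection. Both are handled cleanly by the minimal-coordinate description above; a short induction on $n$, with the one-point $0$-simplex as base case, could be substituted to avoid coordinates, but the explicit cone construction is the most transparent route and is precisely what is needed afterward to place $k \leq n+1$ distinct $n$-simplices inside $\mathtt{div}\,\mathcal{I}$.
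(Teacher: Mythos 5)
Your construction proves a different statement from the one the paper needs, and the conceptual move you make to justify it---reinterpreting ``disjoint'' as ``interior-disjoint''---is precisely where the proposal breaks. In the paper, disjoint means disjoint as simplices of the subdivision complex: its proof concludes $\sigma_k \cap \sigma_{k'} = \emptyset$ for $k \neq k'$, i.e., the $(n+1)$ simplices share no vertices (hence no faces) whatsoever. This reading is not optional; it is forced by how Lemma~\ref{thm:arbitrary_subdivision} is used in Theorem~\ref{thm:arbitrary_3pool}. There, the protocol complex $\mathcal{P}_v$ of the $k$-distinct-pool problem contains $k$ pairwise \emph{disconnected} $n$-simplices, and the simulation map $\Phi$ must send them to simplices of $\mathtt{div}^1 \mathcal{I}$ that remain pairwise disconnected. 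Your stellar pieces $\tau_i = \{b\} * F_i$ all contain the barycenter $b$---indeed any two of them share the $(n-1)$-dimensional face $\{b\} * (F_i \cap F_j)$---so the images would all be glued together at $b$, and the separation on which the reduction argument relies is destroyed.

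Your impossibility claim (``vertex-disjointness would demand $(n+1)^2$ vertices and is impossible for any honest subdivision once $n \geq 1$'') conflates two readings of the lemma. It would be impossible if the subdivision were required to consist of \emph{exactly} $(n+1)$ top-dimensional cells tiling $\sigma$; but the lemma only asserts that a subdivision \emph{contains} $(n+1)$ pairwise disjoint $n$-simplices among its (many more) top-dimensional cells, with the remaining cells filling the space between them. The paper's choice, the chromatic subdivision $\mathtt{Ch}(\sigma)$, achieves exactly this: for each facet $F_i = \sigma \setminus \{v_i\}$, take the central simplex of $\mathtt{Ch}(F_i)$, whose $n$ vertices are labeled by $F_i$, and join it with the single vertex of the central inner simplex colored by $v_i$; the resulting $(n+1)$ ``corner'' $n$-simplices are pairwise vertex-disjoint, since their facet-level vertices carry distinct face labels and their central vertices carry distinct colors. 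Already at $n=1$ the difference is visible: the chromatic subdivision of an edge is a path on four vertices whose two end edges are disjoint, whereas your barycentric cone forces the two halves to share the midpoint. Note also that $\mathtt{Ch}(\sigma)$ has at least $(n+1)^2$ vertices (exactly $4$ for $n=1$, and $12 > 9$ for $n=2$), so the counting obstruction you invoke rules out only vertex-poor subdivisions such as the stellar one you chose.
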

\begin{proof}
We choose \textit{Chromatic} subdivision, $\mathtt{Ch(\cdot)}$, to prove the claim.
In the following we provide an informal definition of Chromatic subdivision;
a full explanation can be found in~\cite{kozlov_book08}.
A Chromatic subdivision recursively divides the $n$-dimensional simplex's $k$-skeletons, $0 \leq k < n$, into $k$ pieces plus the $\epsilon$-areas.
For a 1-dimensional line segment,
the $\epsilon$-area is a central segment confined by two new vertices in-between the original two endpoints.
For a 2-dimensional triangle,
an inner triangle with its three vertices is ``locally'' connected to the two new vertices in the 1-dimensional case and the three vertices in the original triangle.

Let $\sigma$ be a $n$-dimensional simplex.
For each facet of $\mathtt{skel}^{n-1} \sigma$, 
there must be $(n-1)$ new distinct vertices,
which together with the vertex in the central $n$-dimensional inner simplex constitute a local $n$-simplex.
Since there are exactly $(n+1)$ of $(n-1)$-skeletons,
we can find $(n+1)$ of these local $n$-dimensional simplices,
none of which is the original $n$-simplex.
Denote these simplices as 
$
\{\sigma_0, \sigma_1, \ldots, \sigma_n\}.
$
Evidently, $\mathtt{dim}(\sigma_k) = n$, $0 \leq k \leq n$.
For any $\sigma_k$, none of its vertices is shared with others because $n$ vertices are from its local $(n-1)$-skeleton and the $(n+1)$-th vertex is from the central inner $n$-simplex that is distinct for each local simplex.
Similarly, none of $\sigma_k$'s higher skeletons are shared with any other $\sigma_{k'}$, $k' \neq k$, because these skeletons are encompassed by a ``larger'' simplex contributed by the original vertices in $\sigma$.
Therefore, we have 
\[
\bigcup_{0 \leq k < k' \leq n} \sigma_k \cap \sigma_{k'} = \emptyset,
\]
which is claimed by the theorem.
\end{proof}

Now we are ready to state the main result of the solvability of $k$-distinct-pool problem, $k \geq 3$.

\begin{theorem}
\label{thm:arbitrary_3pool}
If a blockchain has pools of at least three distinct sizes,
it does not have an equilibrium.
\end{theorem}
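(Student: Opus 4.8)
The plan is to argue by contradiction via a reduction, dual in spirit to the 2SA-to-2DP construction of the previous subsection. I would suppose that a blockchain with $k \geq 3$ distinct pool sizes \emph{does} admit an equilibrium; by the formalization of this section, this means the KDP task is solved by some protocol $(\mathcal{I}, \mathcal{P}_v, \Xi_v)$. The aim is then to convert this protocol into one that solves the $k$-set agreement task of Def.~\ref{def:ksa} for $k \geq 3$, which Lemma~\ref{thm:ksa} forbids.

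First I would invoke the structural observation already recorded above: because an equilibrium forces each of the $k$ size-partitions to settle into a single unambiguous assignment, the protocol complex $\mathcal{P}_v$ must contain $k$ pairwise-disconnected $n$-simplices, one arising from each partition $\mathcal{I}_\kappa$. These $k$ mutually disconnected simplices are precisely the ingredients needed to manufacture the $k$ distinct output values that Def.~\ref{def:ksa} demands of a $k$-set agreement solution.

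Next I would build the simulation map $\Phi \colon \mathcal{P}_v \rightarrow \mathtt{div}^M \mathcal{I}$ into the KSA protocol complex, which for $M$ large enough is simply $\mathtt{div}^M \mathcal{I}$. The decisive geometric ingredient is Lemma~\ref{thm:arbitrary_subdivision}: already a single subdivision $\mathtt{div}\,\mathcal{I}$ contains $n+1$ pairwise-disjoint $n$-simplices. Since the setup of \S\ref{sec:arbitrary_ternay} guarantees $n+1 \geq k$, there is enough room to send the $k$ disconnected $n$-simplices of $\mathcal{P}_v$ to $k$ genuinely disjoint $n$-simplices inside $\mathtt{div}^M \mathcal{I}$. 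Composing $\Phi$ with a decision map $\delta$ then labels the $k$ partitions with $k$ distinct values of $V^{out}$, which is exactly a solution of the $k$-set agreement task.

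The hard part, where I would concentrate the rigor, is verifying that this composite genuinely respects the KSA task---that is, checking that the carried-by condition $\delta \circ (\Phi \circ \Xi_v) \subseteq \Delta$ for the KSA carrier map $\Delta$ holds on \emph{every} input simplex of $\mathcal{I}$ and not merely on the distinguished top-dimensional ones, so that $\Phi$ is a bona fide simulation in the sense of Def.~\ref{def:model_sim}. Disjointness from Lemma~\ref{thm:arbitrary_subdivision} secures the output side, but one must still confirm that no two partitions are forced to collide under $\Phi$ on shared lower-dimensional faces. Once this is established, the contradiction is immediate: the composite solves $k$-set agreement for $k \geq 3$, contradicting Lemma~\ref{thm:ksa}, so no equilibrium can exist whenever the blockchain has at least three distinct pool sizes.
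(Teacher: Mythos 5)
Your proposal follows essentially the same route as the paper's own proof: assume a KDP protocol exists, extract the $k$ pairwise-disconnected $n$-simplices from its protocol complex, use Lemma~\ref{thm:arbitrary_subdivision} (with $n+1 \geq k$) to embed them disjointly into $\mathtt{div}^M \mathcal{I}$ via a simulation map $\Phi$, and contradict the unsolvability of $k$-set agreement from Lemma~\ref{thm:ksa}. The one step you flag as needing rigor---verifying the carried-by condition on all of $\mathcal{P}_v$, not just the top-dimensional simplices---is precisely the step the paper itself dispatches by further subdividing the unused simplices, so your attempt matches the paper's argument in both structure and level of detail.
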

\begin{proof}
First, we show that there is a simulation from KDP to KSA.
According to Lemma~\ref{thm:arbitrary_subdivision},
$\mathtt{div}^1 \mathcal{I}$ generates $(n+1)$ disjoint $n$-dimensional simplices\footnote{As other algebraic operations, $\mathtt{div}^1(\cdot)$ is also simplified as $\mathtt{div}(\cdot)$ in the literature. Here we explicitly spell it out to emphasize its power.}.
It should be noted that $k < n$, 
because otherwise we cannot have more partitions than the total number of miners.
It follows that $\mathtt{div}^1 \mathcal{I}$ has enough capacity to accommodate $k$ distinct pools,
each of which cannot have more than $(n+1)$ vertices.
It remains to show that all the simplices in $\mathcal{P}_v$ other than the $k$ simplices of distinct pools can also be mapped to $\mathtt{div}^M \mathcal{I}$.
This is a trivial task because we can always keep subdividing the simplices that are not chosen to hold the $k$ disjoint simplices in $\mathtt{div}^1 \mathcal{I}$ until we have enough capacity.

We know that KSA cannot be solved for $k \geq 3$.
Because KDP can simulate KSA, KDP cannot be solved either.
\end{proof}

\subsection{Blockchains with Arbitrary Pools}

In this section, we summarize the main results for blockchains with arbitrary pools and discuss the relations with game-theoretical approaches.

If a blockchain has pools of only two distinct sizes,
Theorem~\ref{thm:arbitrary_2pool} states that an equilibrium indeed exists.
It should be clear, however, that this claim says nothing about how the protocol's complexity looks like;
the algorithm could be polynomial, exponential, or NP-complete, which is not the scope of this paper.
In practice, a two-distinct-pool blockchain network is, admittedly, rare.
To this end, it is more interesting to draw some conclusions on the case where more than two pools are formed in the blockchain.
Unfortunately, Theorem~\ref{thm:arbitrary_3pool} tells us that no equilibrium exists for these more interesting cases.
Of course, the impossibility result for $k \geq 3$ distinct pools, although somewhat discouraging,
is only based on the mainstream computing models nowadays.
The authors, in their very humble opinions, still hold high hopes that future hardware and communication advances plus possibly more advanced mathematical tools would lead to equilibrium results.

Speaking of equilibria over groups of participants, we just cannot skip the discussion on game theory.
From a game-theoretical point of view, the pooling strategies are aligned with the so-called \textit{coalitional games}.
Unlike \textit{non-cooperative games}, where one is interested in the utility equilibrium for individual participants,
a coalitional game deals with the utility of a coalition of participants---just like the pools of miners in blockchains.
More importantly, the well-known \textit{Nash equilibrium} that states an equilibrium \textit{always} exists is only applicable to non-cooperative games;
the counterpart equilibrium in coalitional games, 
namely \textit{core}, is proven not to always exist.
In this sense, a large portion of this section just tries to achieve a stronger result than that in game theory:
when $k \geq 3$, such a core does not exist for $k$-distinct-pool blockchains.
In coalitional game theory, there are a few more relaxed equilibria than the core per se.
For instance, a \textit{bargaining set} might be a practical replacement that implies that a participant not in the core might still choose to stay in its current coalition because its claims (for higher utility, for example) can be countered.
An extensive review of a coalitional game's solution concepts can be found in~\cite{peleg_book07}.
For a general introduction to game theory, see~\cite{maschler_book13}.

\section{Additional Related Work}

We briefly review important work on game-theoretical techniques and topological approaches in blockchains, distributed computing, and computer security.

After the pioneering work~\cite{ieyal_sp15} on the game-theoretical analysis of blockchain pools,
game-theoretical approaches started to revive in computer security and blockchain communities.
In~\cite{mnasr_ccs16}, game theory is used to model the interactions between decoy router deployers and the censors.
In~\cite{ieyal_ccs18}, authors studied miner's behavior and blockchain's status if the miners' incentive lies only on transaction fees.
In~\cite{dzhao_blockchaingame20}, a game-theoretical model was developed by considering Byzantine failures for permissioned blockchains.

Topological approaches have a long story in distributed computing and are recently used to design cross-blockchain transactions~\cite{dzhao_cidr20},
such as distributed cross-blockchain protocols~\cite{dzhao_vldb20}.
In~\cite{dzhao_focs20}, a topological space is constructed to be homeogeneous to the cross-blockchain transactions using point-set topology.
In~\cite{dzhao_disc20}, an algebraic-topological model shows that a naive message-passing protocol cannot complete a cross-blockchain transaction.
In~\cite{tnowak_podc19}, point-set topology was used to characterize the solvability of consensus problems in distributed systems.
Recent work on modeling the data structure of blockchains through algebraic topology appears at~\cite{wfloch_blockchainhomology18}. 

\section{Conclusion}
Our conventional wisdom on blockchain's pooling equilibrium is limited:
when the blockchain pools are abstracted as a non-cooperative game, two pools can reach a Nash equilibrium with a closed-form formula;
an arbitrary number of pools still exhibit an equilibrium as long as the pools have an equal number of miners.

This paper studies the equilibrium in a blockchain of arbitrary pools.
First, we show that the equilibrium among $q$ identical pools, coinciding the result demonstrated by~\cite{ieyal_sp15} through game theory,
can be constructed using a topological approach.
Second, if the pools are of different size, 
we show that (i) if the blockchain's pools exhibit two distinct sizes, an equilibrium can be reached, and
(ii) if the blockchain has at least three distinct pool sizes, there does not exist an equilibrium.

\section*{Acknowledgement}

This work is, in part, supported by the U.S. Department of Energy under contract number {DE-SC0020455}.
This work is also supported by a Google Cloud award and an Amazon research award.
The authors are grateful for the valuable discussion with Mohammad Sadoghi (University of California, Davis) on an earlier version of this work. 

\bibliographystyle{ACM-Reference-Format}
\bibliography{ccs-sample}

\end{document}